\newcommand{\techRep}{true} 
\newcommand{\iftechrep}{\ifthenelse{\equal{\techRep}{true}}}
\let\c@theorem\@undefined
\let\theorem\@undefined
\let\endtheorem\@undefined
\let\lemma\@undefined
\let\endlemma\@undefined
\let\corollary\@undefined
\let\endcorollary\@undefined
\let\definition\@undefined
\let\enddefinition\@undefined
\let\example\@undefined
\let\endexample\@undefined
\let\remark\@undefined
\let\endremark\@undefined
\theoremstyle{plain}
\newtheorem{theorem}{Theorem}
\newtheorem{lemma}[theorem]{Lemma}
\theoremstyle{definition}
\newtheorem{definition}[theorem]{Definition}
\theoremstyle{remark}
\crefname{lemma}{Lemma}{Lemmas}
\crefname{definition}{Definition}{Definitions}
\theoremstyle{plain}
\newtheorem{proposition}[theorem]{Proposition}
\crefname{proposition}{Proposition}{Propositions}
\theoremstyle{remark}
{\itshape}{\rmfamily}
\crefname{fact}{Fact}{Facts}
\crefname{claim}{Claim}{Claims}
\tikzset{
    every picture/.style={>=stealth,auto,node distance=2cm,}
}
\tikzstyle{every state}=[
\title{A Polynomial-Time Algorithm for Reachability in Branching VASS
  in Dimension One} 
\titlerunning{Reachability in Branching VASS in
  Dimension One}
\author[1]{Stefan G\"oller\thanks{Supported by Labex Digicosme,
    Univ.\ Paris-Saclay, project VERICONISS.}}
\author[1]{Christoph Haase\textsuperscript{*}} 
\affil[1]{LSV, CNRS \& ENS Cachan\\ Universit\'e Paris-Saclay,
  France\\ \texttt{\{goeller,haase\}@lsv.ens-cachan.fr}}
\author[2]{Ranko Lazi\'c\thanks{Supported by the EPSRC, grants EP/M011801/1 and EP/M027651/1.}}
\author[2]{Patrick Totzke\textsuperscript{$\dagger$}}
\affil[2]{DIMAP, Department of Computer Science\\ University of Warwick, United Kingdom\\\texttt{\{r.s.lazic,p.totzke\}@warwick.ac.uk}}
\authorrunning{S.\ G\"oller, C.\ Haase, R.\ Lazi\'c and P.\ Totzke}
\subjclass{F.1.1 Models of Computation}
\keywords{branching vector addition systems, reachability, coverability, boundedness}
\begin{document}

\maketitle

\begin{abstract}
  Branching VASS (\BVASS{}) generalise vector addition systems with
  states by allowing for special branching transitions that can
  non-deterministically distribute a counter value between two control
  states. A run of a BVASS consequently becomes a tree, and
  reachability is to decide whether a given configuration is the root
  of a reachability tree. This paper shows \P-completeness of
  reachability in \BVASS\ in dimension one, the first decidability
  result for reachability in a subclass of \BVASS\ known so
  far. Moreover, we show that coverability and boundedness in
  \BVASS\ in dimension one are \P-complete as well.
\end{abstract}

\section{Introduction}

Vector addition systems with states (VASS), equivalently known as
Petri nets, are a fundamental model of computation which comprise a
finite-state controller with a finite number of counters ranging over
the naturals. The number of counters is usually refereed to as the
dimension of the VASS. A configuration $q(\vec{n})$ of a VASS in
dimension $d$ consists of a control state $q$ and a valuation
$\vec{n}\in \N^d$ of the counters. A transition of a VASS can
increment and decrement counters and is enabled in a configuration
whenever the resulting counter values are all non-negative, otherwise
the transition is disabled. Consequently, VASS induce an infinite
transition system. Three of the most fundamental decision problems for
VASS are reachability, coverability and boundedness. Given a target
configuration $q(\vec{n})$ and some initial configuration,
reachability is to decide whether starting in the initial
configuration there exists a path ending in $q(\vec{n})$ in the
induced infinite transition system. Coverability asks whether some
configuration $q(\vec{n}')$ can be reached for some $\vec{n}'\ge
\vec{n}$, where ${\ge}$ is defined component-wise. Boundedness is the
problem to decide whether there are infinitely many different configurations
reachable from a given starting configuration.
Those decision problems find
a plethora of applications, for instance in the verification of
concurrent programs. Coverability can, for example, be used in order
to validate mutual exclusion properties of shared-memory concurrent
programs~\cite{GS92}; reachability is a key underlying decision
problem in the verification of liveness properties of finite-data
asynchronous programs~\cite{GM12}. Even though the complexity of
coverability and boundedness are well-understood and known to
be \EXPSPACE-complete~\cite{L76,Rack78}, the precise complexity of
reachability remains a major unsolved problem; a non-primitive recursive 
upper bound ($\mathbf{F}_{\omega^3}$) has only recently been
established~\cite{LS15} and the best known lower bound
is \EXPSPACE~\cite{L76}.

The situation is even more dissatisfying when considering
\emph{branching extensions} of VASS. Such \emph{branching VASS
  (BVASS)} are additionally equipped with special branching
transitions of the form $(q,p,p')$. When in a configuration
$q(\vec{n})$, a BVASS can simultaneously non-deterministically branch
into configurations $p(\vec{m})$ and $p'(\vec{m}')$ such that
$\vec{n}=\vec{m}+\vec{m}'$. Reachability of a configuration
$q(\vec{n})$ then is to decide whether there exists a proof tree whose
root is labelled with $q(\vec{n})$ and whose leaves are all labelled
with designated target control states in which all counters have value
zero; coverability and boundedness are defined analogously as
above. While coverability and boundedness are known to be
\textsf{2}-\EXPTIME-complete~\cite{DJLL13}, reachability in
BVASS is not known to be decidable, not even in any fixed dimension.
Recently, non-elementary lower bounds for reachability in BVASS have
been obtained~\cite{LaS15}. Reachability in BVASS is closely related
and in fact equivalent to decidability of the
multiplicative-exponential fragment of linear logic~\cite{GGS04}, and
also an underlying decision problem in various other applications for
instance in computational linguistics, cryptographic protocol
verification, data logics and concurrent program verification;
see~\cite{LaS15} for more details.

The primary contribution of this paper is to provide a polynomial-time
algorithm for reachability in \BVASS\ in dimension one (\BVASSone) and
to show that reachability is in fact \P-complete.  To the best of our
knowledge, we give the first decidability result for reachability in a
fragment of BVASS. Let us remark that a decidability result, in
particular with such low complexity is actually quite surprising. On
the one hand, due to the infinite state space of \BVASSone\ it is not
immediate that reachability is decidable. In particular, the emptiness
problem for conjunctive grammars over a unary alphabet, which can be
seen as a slight generalisation of \BVASSone\ with special alternating
transitions that can simultaneously branch into two control states
while retaining the same counter value (known as ABVASS$_1$), is
undecidable~\cite{JO10}. On the other hand, if we disallow branching
rules in ABVASS$_1$ and thus obtain AVASS$_1$ then reachability
is \PSPACE-complete~\cite{Ser06,JS07}.

Due to the presence of only one single counter, it is possible to
establish a small-model property and to show that if a configuration
is reachable in a \BVASSone\ then there exists a so-called
reachability tree of exponential size. What causes a main challenge
when establishing a polynomial-time algorithm is that this bound is
optimal in the sense that, as we show in \cref{sec:lower}, there exist
families of \BVASSone\ whose reachability trees are inherently of
exponential size, and which also contain an exponential number of
different counter values. Consequently, reachability cannot be
witnessed in polynomial time by explicitly constructing a witnessing
reachability tree. Instead, in \cref{sec:reachability} we show that
polynomial-time computable certificates for the reachability of a
configuration suffice. These certificates have two parts: the first is
a table that, for certain $d>0$ contains those pairs of control states
$q$ and residue classes $r$ modulo $d$ such that $q(n)$ is reachable
for some sufficiently large $n$ with $n\equiv r\bmod d$. This is
called residue reachability and described in \cref{sec:residue}. The
second part, described in \cref{ssec:expandable}, is a compressed
collection of incomplete small reachability trees, so-called
expandable partial reachability trees, whose leaves are either
accepting configurations or
have some ancestor node with the same control state
and a strictly smaller counter.
In the latter case, the corresponding subtree
can be repeated arbitrarily often, which leaves some 
configuration with an
arbitrarily large counter value in a certain residue class. This
eventually enables us to witness the existence of a reachability tree
via residue reachability.

In \cref{sec:boundedness}, we show that coverability and boundedness
are \P-complete for \BVASSone.  For coverability, the upper bound
follows easily via a reduction to reachability.  For boundedness, this
is not the case and we require a specifically tailored argument.

\iftechrep{Due to space constraints, the proofs of some statements can
  be found in an appendix.}{Due to space constraints, the proofs of
  some statements can be found in the technical report accompanying
  this article~\cite{GHLT16}.}

\section{Preliminaries}\label{sec:preliminaries}
We write $\Z$ and $\N$ for the sets of integers and non-negative integers,
respectively, and
define $[i,j]\defeq \{i,i+1,\ldots,j-1,j\}$, for given integers $i<j$.
For $d\geq 1$ we define $\Z_d\defeq[0,d-1]$.

The set of finite words over alphabet $A$ is denoted by $A^*$
and the length of a word $w\in A^*$ is written as $\len{w}$.
For two words $u,v\in A^*$, we say $u$ is a \emph{prefix} of $v$
(written as $u\prefixeq v$) if $v=uw$ for some $w\in A^*$.
It is a \emph{strict prefix} ($u\prefix v$) if $u\prefixeq v$ and $u\neq v$.
We say $u$ and $v$ are \emph{incomparable} if neither $u\preceq v$ nor $v\preceq u$.
A set $U\subseteq A^*$ is {\em prefix-closed} if for all $u\in U$
and all $v\in A^*$ we have that $v\prefixeq u$ implies $v\in U$.

Let $\Sigma$ be a set. A {\em $\Sigma$-labelled (finite) tree} is a
mapping $T \colon U\rightarrow \Sigma$ where $U\subseteq A^*$ is a
non-empty finite prefix-closed set of {\em nodes} for some finite set
$A$. For $V\subseteq U$, we define $T(V) \defeq \{ T(v) \mid v\in V
\}$. A {\em leaf} of $T$ is a node $u\in U$ such that there is no
$v\in U$ with $u\prec v$; every node of $T$ that is not a leaf is
called {\em inner node}. A node $u$ is an {\em ancestor}
(resp.\ \emph{descendant}) of a node $v$ if $u\preceq v$
(resp.\ $v\preceq u$) and a {\em strict ancestor} (resp.\ {\em strict
  descendant}) if $u\prec v$ (resp.\ $v\prec u$).
For any node $u$ we define the {\em subtree of $T$ rooted at $u$} as
$T^{\downarrow u}\colon u^{-1}U\rightarrow\Sigma$, where
$u^{-1}U\defeq\{x\in A^*\mid ux\in U\}$ and $T^{\downarrow u}(x)\defeq
T(ux)$.  Note that $u^{-1}U$ is a prefix-closed subset of $A$. We
define $h(u)\defeq\max\{|x| \mid x\in u^{-1}U\}$ to be the {\em
  height} of the subtree rooted at $u$ and and define $h(T)\defeq
h(\varepsilon)$.  Note that $h(u)=0$ if, and only if, $u$ is a
leaf. We say $T$ is {\em binary} if $U\subseteq\{0,1\}^*$; in this
case if for some node $u\in U$ we have that $u0\in U$, then $u0$ the
{\em left child} of $u$ and if $u1\in U$ we say that $u1$ is the {\em
  right child} of $u$.

\subsection{Branching Vector Addition Systems}
In the following, $\vec{n}$ and $\vec{z}$ will denote elements from
$\N^k$ and $\Z^k$, respectively; addition on $\Z^k$ is defined
component-wise.
\begin{definition}
  Let $k\geq 1$. A {\em $k$-dimensional branching vector addition
    system with states (BVASS$_k$)} is a tuple $\B=(Q,\Delta,F)$ where
  $Q$ is a finite set of {\em control states}, $\Delta\subseteq
  Q^3\cup (Q\times\{-1,0,1\}^k\times Q)$ is a finite set of
  \emph{transitions}, and $F\subseteq Q$ is a set of {\em final
    states}. The \emph{size} $\abs{\B}$ of a BVASS is defined as
  $\abs{\B}\defeq \abs{Q}+k \cdot \abs{\Delta}$.
\end{definition}

The semantics of BVASS is given in terms of reachability trees. A {\em
  partial reachability tree} of a \BVASS{k} $\B$ is a
$Q\times\N^k$-labelled binary tree $T \colon U\rightarrow
Q\times\N^k$, where each inner node $u\in U$ with $T(u)=(q,\vec{n})$
satisfies exactly one of the following conditions:
\begin{itemize}
\item $u0,u1\in U$, and if $T(u0)=(p,\vec{n}_0)$ and
  $T(u1)=(p',\vec{n}_1)$, then $\vec{n}=\vec{n}_0+\vec{n}_1$ and
  $(q,p,p')\in\Delta$; or
\item $u0\in U, u1\not\in U$, and if $T(u0)=(p,\vec{n}_0)$, then
  $\vec{n}_0=\vec{n}+\vec{z}$ and $(q,\vec{z},p)\in\Delta$.
\end{itemize}
Note that in the second condition, counter values can be seen as being
propagated top down. A {\em reachability tree} is a partial
reachability tree $T$ where $T(u)\in F\times\{0\}^k$ for all leaves
$u$ of $T$.  We call these nodes \emph{accepting} nodes.  For each
$j\in\N$ we say that a partial reachability tree $T$ is {\em
  $j$-bounded } if $T(u)\in Q\times[0,j]^k$ for all $u\in U$.  We call
$Q\times\N^k$ the set of {\em configurations} of $\B$ and for the sake
of readability often write its elements $(q,\vec{n})$ as
$q(\vec{n})$. We say that a configuration $q(\vec{n})$ is {\em
  reachable } if there exists a reachability tree $T$ with
$T(\varepsilon)=q(\vec{n})$. Note that in particular every
configuration in $F\times\{0\}^k$ is reachable. The reachability set
$\reach(q)$ of a control state $q$ is defined as $\reach(q) \defeq \{
\vec{n}\in \N \mid q(\vec{n}) \text{ is reachable} \}$.
The decision problem that we mainly focus on in this paper is
\emph{reachability}, defined as follows:
\medskip
\problemx{Reachability in BVASS$_k$} 
{A BVASS$_k$ $\B=(Q,\Delta,F)$, a control state $q$ and $\vec n\in \N^k$
  encoded in unary.}
{Is $q(\vec n)$ reachable?}
\medskip
Our main result is that reachability is \P-complete in dimension one.
\begin{theorem}\label{thm:main}
  Reachability in \BVASSone\ is \P-complete.
\end{theorem}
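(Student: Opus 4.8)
The plan is to prove the two directions of \P-completeness separately: \P-hardness via a logspace reduction from a known \P-complete problem, and membership in \P by designing polynomial-time verifiable certificates for reachability.

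For \P-hardness, I would reduce from the (monotone) Circuit Value Problem, or equivalently from alternating graph reachability (the AND/OR graph accessibility problem), which is \P-complete. The key observation is that a branching transition $(q,p,p')$ demands that \emph{both} $p$ and $p'$ be reached, so it behaves exactly like an AND-node, whereas the nondeterministic choice among the outgoing transitions of a control state behaves like an OR-node. To make the counter irrelevant I would pin it to $0$ throughout: for a configuration $q(0)$, a branching transition splits $0$ into $0+0$, and every unary transition is given offset $0$, so the construction faithfully simulates pure alternating reachability, with the sinks/true-gates mapped to final states in $F$ reached at counter value $0$. This yields a straightforward logspace reduction, so that even reachability of $q(0)$ alone is already \P-hard.

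The harder direction is membership in \P, which is the core of the paper. The obvious approach of guessing and checking a witnessing reachability tree fails, because as shown in \cref{sec:lower} such trees can be forced to have exponential size and to contain exponentially many distinct counter values, so a tree cannot serve as a polynomial-size certificate. Instead I would assemble a certificate in two parts. First, for a suitable modulus $d>0$ I would compute the table of \emph{residue-reachable} pairs $(q,r)$, meaning that $q(n)$ is reachable for some arbitrarily large $n$ with $n\equiv r\bmod d$; this residue reachability is computed by a monotone saturation (least fixed point) over the polynomially many pairs $(q,r)$, closing under unary and branching transitions, and hence runs in polynomial time (\cref{sec:residue}). Second, I would compute a compressed family of \emph{expandable partial reachability trees}: small, polynomially bounded partial trees each of whose leaves is either accepting or carries a control state that already occurs, with a strictly smaller counter, at an ancestor (\cref{ssec:expandable}). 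Such a leaf marks a pumpable context: the subtree between the ancestor and the leaf can be iterated arbitrarily often, producing a configuration whose counter can be driven arbitrarily high within a fixed residue class.

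The main obstacle, and the crux of the correctness argument, is to show that these two polynomial-size ingredients are together sound and complete for reachability. Completeness requires decomposing an arbitrary (possibly exponential) reachability tree into a small expandable skeleton plus pumped contexts, and verifying that every large counter value created by pumping lands in a residue class that the residue-reachability table certifies as fillable; soundness requires the converse, reconstructing a genuine reachability tree from the table together with the expandable trees by instantiating each pumpable context the appropriate number of times. Establishing that a single modulus $d$ of polynomial magnitude suffices, and that the interplay between residue reachability and expansion exactly characterises membership in $\reach(q)$, is the technical heart; once this equivalence is in place, the algorithm simply computes both certificate parts in polynomial time and checks the characterising condition, giving the \P\ upper bound and completing the proof of \P-completeness.
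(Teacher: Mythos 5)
Your outline reproduces the paper's architecture---a monotone-circuit-value reduction for \P-hardness (that part is fine and matches \cref{prop:mcvp-construction}), and, for membership, a residue-reachability table combined with small pumpable partial trees---but both places where you commit to actual mechanisms are exactly where the correctness lives, and both fail as stated. First, computing residue reachability by ``monotone saturation over the pairs $(q,r)$, closing under unary and branching transitions'' is the naive procedure that \cref{sec:residue} explicitly rejects: closure modulo $d$ forgets the non-negativity constraints on actual counter values, so it certifies residues obtainable only by driving the counter below zero. (Trivial example: $F=\{f\}$, single transition $(q,+1,f)$; closing from $(f,0)$ yields $(q,d-1)$, yet $\reach(q)=\emptyset$ because the child of $q(n)$ would have to be $f(n+1)$ with $n+1=0$.) The paper's fixed point is sound only because of its base and guards: it first computes, by dynamic programming on exact values, the set $S$ of configurations with $(n_0+N)$-bounded reachability trees and the set $R_0$ of residues of roots of almost-bounded trees whose root value is at least $n_0+N=n_0+|Q|\cdot d$, and then closes under $\Delta$ only when at least one operand is already certified large; the invariant that after $i\le N$ closure steps the witnessed value is still at least $n_0+N-i\ge n_0\ge 0$ is what makes saturation legitimate. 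That ingredient is absent from your sketch.

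Second, and more seriously, your certificate omits the paper's \emph{exclusivity} condition (the least common ancestor of any two increasing leaves must be a proper ancestor of one of their anchors), and without it the characterisation you intend to prove is false, not merely harder to establish. Concretely, take $Q=\{q,p,a,b,f\}$, $F=\{f\}$, and transitions $(q,p,p)$, $(p,0,a)$, $(a,+1,q)$, $(q,-1,b)$, $(b,-1,f)$. Here $q(2)$ is reachable via $b(1)$ and $f(0)$, and one checks that the full reachability set is $\{f(0),b(1),a(1),p(1),q(2)\}$, so $q(0)$ is \emph{not} reachable. Yet the partial reachability tree with root $q(0)$, children $p(0),p(0)$, each followed by $a(0)$ and then the leaf $q(1)$, satisfies every condition you state: both leaves are increasing with anchor the root, and each induces a valid residue instance (modulus $1$, witnessed by $q(2)$). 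The defect is that pumping the context of one leaf duplicates the other leaf, and the two pumping obligations interfere; exclusivity, via \cref{lem:well-nodes}, guarantees that every node has at most one non-expandable child and that each anchor serves a unique leaf, which is precisely what makes the inductive reconstruction in \cref{lem:expandable-reachable} go through. A smaller inaccuracy: the paper does not use ``a single modulus $d$''; each increasing leaf $v$ with anchor $u$ generates its own residue instance with modulus $\counter(v)-\counter(u)$, checked independently by the alternating logspace procedure.
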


\section{Lower Bounds}\label{sec:lower}
As a warm-up exercise and in order to familiarise ourselves with
\BVASSone, we begin with proving a couple of lower bounds for the
reachability problem. First, it is not difficult to see that the
reachability problem is \P-hard via a reduction from the monotone
circuit value problem (\textsc{MCVP})~\cite{Pap94}. By simulating
$\vee$-gates of a Boolean by non-deterministic branching and
$\wedge$-gates by splitting transitions, the following statement can
easily be obtained.

\begin{restatable}{proposition}{propMCVPConstruction}\label{prop:mcvp-construction}
  Let $\C$ be a Boolean circuit. There exists a logspace computable
  \BVASSone\ $\B$ with a control state $q$ such that $q(0)$ is
  reachable if, and only if, $\C$ evaluates to true.
\end{restatable}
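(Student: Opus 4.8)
The plan is to reduce from the monotone circuit value problem, which is \P-complete~\cite{Pap94}. Given a monotone Boolean circuit $\C$ with $\wedge$- and $\vee$-gates and input gates labelled by the constants $\textsf{true}$ and $\textsf{false}$, I would construct a \BVASSone\ $\B$ whose control states include one state $q_g$ for every gate $g$ of $\C$. The guiding invariant, to be established by induction along a topological ordering of the gates, is that $q_g(0)$ is reachable in $\B$ if, and only if, the gate $g$ evaluates to $\textsf{true}$. Setting $q\defeq q_{g_{\mathrm{out}}}$ for the output gate $g_{\mathrm{out}}$ then yields the statement. Since every transition I introduce preserves a zero counter (see below), every configuration appearing in a reachability tree rooted at $q_g(0)$ carries counter value $0$, so the single counter plays no arithmetic role here and merely records the Boolean evaluation.

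The gadgets are as follows. For a $\vee$-gate $g=g_1\vee g_2$, I add the two unary transitions $(q_g,0,q_{g_1})$ and $(q_g,0,q_{g_2})$ to $\Delta$; by the second clause in the definition of a reachability tree, a node labelled $q_g(0)$ may take either transition while keeping the counter at $0$, so $q_g(0)$ is reachable exactly when one of $q_{g_1}(0)$, $q_{g_2}(0)$ is. For a $\wedge$-gate $g=g_1\wedge g_2$, I add the splitting transition $(q_g,q_{g_1},q_{g_2})\in\Delta$; by the first clause, a node $q_g(0)$ splits into $q_{g_1}(n_0)$ and $q_{g_2}(n_1)$ with $n_0+n_1=0$, forcing $n_0=n_1=0$, so $q_g(0)$ is reachable exactly when both $q_{g_1}(0)$ and $q_{g_2}(0)$ are. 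For an input gate labelled $\textsf{true}$ I place $q_g\in F$, making $q_g(0)$ an accepting leaf; for an input gate labelled $\textsf{false}$ I leave $q_g$ non-final and give it no outgoing transition, so that $q_g(0)$ can be neither an inner node nor an accepting leaf, hence is not reachable.

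Correctness then follows by a straightforward induction on the gates in topological order, using the three equivalences above as the inductive step and the input gates as the base case. The construction is clearly computable in logarithmic space, as it amounts to a single pass over the gate list emitting a bounded number of states and transitions per gate. I do not expect a genuine obstacle; the only point requiring care is the observation that the counter stays $0$ throughout, which is precisely what guarantees that the splitting transition for $\wedge$ cannot leak a positive value into one child and thereby spuriously discharge the other. To obtain the statement for arbitrary (non-monotone) Boolean circuits, one first rewrites $\C$ into negation normal form by pushing negations down to the inputs via De~Morgan's laws, at most doubling its size and turning every input literal into a known constant, which reduces the general case to the monotone one treated above.
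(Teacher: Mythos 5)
Your construction is correct and essentially identical to the paper's own proof: one control state per gate, $\vee$-gates simulated by two $0$-labelled unary transitions, $\wedge$-gates by splitting transitions, $\top$-gates made final states, with the counter pinned to $0$ throughout so that splits cannot leak value between children. The only difference is cosmetic: the paper takes $\C$ to be a monotone circuit (an MCVP instance) by definition and so never needs your final negation-normal-form step, which is nonetheless a correct and harmless addition handling arbitrary circuits via the standard dual-rail transformation.
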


\begin{figure}
  \begin{center}
    \includegraphics[scale=1]{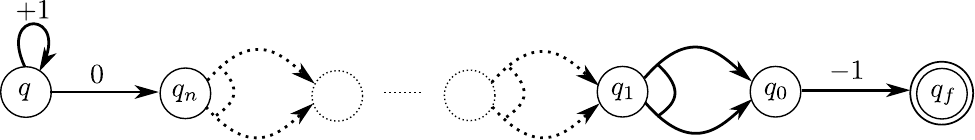}
  \end{center}
  \caption{Illustration of the \BVASSone\ $\B_n$. The reachability set
    of the control state $q_n$ is the singleton set $\{ 2^n \}$, and a
    reachability tree for $q(0)$ contains all counter values between
    $0$ and $2^n$.}
  \label{fig:gadget}
\end{figure}
A challenging aspect when providing a polynomial-time upper bound for
reachability in \BVASSone\ is that reachability trees may be of
exponential size and may contain an exponential number of nodes
labelled with distinct counter values. To see this, consider the
family $(\B_n)_{n\ge 0}$ of \BVASSone, where $\B_n\defeq
(Q_n,\Delta_n,F)$ and where $Q_n\defeq \{ q, q_f \} \cup \{
q_0,\ldots, q_n\}$, $\Delta_n \defeq \{ (q,+1,q), (q,0,q_n) \} \cup \{
(q_i,q_{i-1},q_{i-1}) \mid 0<i\le n \} \cup \{ (q_0,-1,q_f) \}$ and
$F\defeq \{ q_f \}$. The construction is illustrated in
\cref{fig:gadget}. It is easily seen that $q_i(N)$ is reachable if,
and only if, $N=2^i$. Observe that $\reach(q)=\{ 0,\ldots, 2^n \}$ is
finite and that the reachability tree of $q(0)$ contains all counter
values between $0$ and $2^n$. In particular, this allows us to obtain
the following hardness result in which the updates of the
\BVASSone\ are from $\{-1,0,+1 \}$ (i.e. encoded in unary), but the
initial configuration is given in binary, via a straight-forward
reduction from the \NP-complete \textsc{Subset Sum}
problem~\cite{Pap94}.
\begin{restatable}{proposition}{propNPHardness}\label{prop:np-hardness}
  Reachability in \BVASSone\ is \NP-hard if the initial configuration
  $q(n)$ is given in binary.
\end{restatable}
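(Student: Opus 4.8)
The plan is to reduce from \textsc{Subset Sum}: given positive integers $a_1,\ldots,a_m$ and a target $t$, all encoded in binary, decide whether some $S\subseteq\{1,\ldots,m\}$ satisfies $\sum_{i\in S}a_i=t$. I would build a \BVASSone\ together with an initial configuration whose counter value is exactly $t$ (and hence given in binary), while keeping every transition update in $\{-1,0,+1\}$, so that the root of the tree is reachable if, and only if, the \textsc{Subset Sum} instance is positive. The whole difficulty is to handle the binary magnitudes without ever writing them as updates; this is exactly what the doubling gadget $\B_n$ from above lets me do.

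The central building block is $\B_n$, whose states $q_0,\ldots,q_n$ satisfy that $q_i(N)$ is reachable precisely when $N=2^i$. From a single shared copy of these states (with $n$ the bit length of the largest number in the instance) I would assemble, for each constant $c$ given in binary, a \emph{checker} that accepts a counter value if, and only if, it equals $c$. Concretely, if the $1$-bits of $c$ occur at positions $j_1<\cdots<j_k$, the checker has an entry state $s_0$ and branching transitions $(s_{\ell-1},q_{j_\ell},s_\ell)$ for $1\le\ell\le k-1$ that peel off $2^{j_1},\ldots,2^{j_{k-1}}$ in turn, while the final state $s_{k-1}$ is taken to be $q_{j_k}$ and verifies the residual $2^{j_k}$ (for $k=1$ the entry state is simply $q_{j_1}$). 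Each checker thus has $O(\log c)$ states and transitions and uses only updates in $\{-1,0,+1\}$, so the total construction stays polynomial in the bit length of the instance.

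With checkers in hand I would lay out a spine $r_0,\ldots,r_m$ offering two transitions at each $r_{i-1}$: a \emph{skip} transition $(r_{i-1},0,r_i)$ leaving the counter unchanged, and a \emph{take} branching transition $(r_{i-1},r_i,c_i)$, where $c_i$ is the checker for $a_i$. The take transition splits the current counter $v$ as $v=v'+w$, forcing $w=a_i$ (otherwise the $c_i$-subtree cannot accept) and passing the remainder $v'=v-a_i$ to $r_i$; if $v<a_i$ no valid split exists, which correctly forbids overshooting. Declaring $r_m$ final, the leaf $r_m(0)$ is accepting exactly when the amount peeled off along the spine equals $t$. Consequently a reachability tree rooted at $r_0(t)$ exists if, and only if, the taken items form a subset of $\{a_1,\ldots,a_m\}$ summing to $t$.

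I expect the main obstacle to be precisely the interplay that the reduction is designed to sidestep: the binary magnitudes $a_i$ and $t$ must be represented \emph{structurally}, through logarithmically many branchings feeding the power-of-two states $q_j$, rather than through large decrements that would blow up the construction. Once the doubling gadget resolves this, the remaining correctness argument is a routine induction along the spine, using the characterisation that $q_i(N)$ is reachable precisely when $N=2^i$ established for $\B_n$, together with the fact that each take branch both subtracts $a_i$ from the running total and spawns an independently accepting checker subtree.
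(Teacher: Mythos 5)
Your proposal is correct and takes essentially the same route as the paper's own proof: a reduction from \textsc{Subset Sum} that extends the doubling gadget $\B_n$ into, for each binary-encoded number, a checker state whose reachability set is exactly that singleton (built from logarithmically many branchings into the power-of-two states), followed by a spine of states that non-deterministically skips an item or splits off exactly its value via a branching transition, starting from the binary initial counter value $t$ and accepting when the residual value is zero. The only cosmetic difference is that your checkers branch only at the $1$-bit positions, whereas the paper walks through all bit positions and uses $0$-update transitions for the $0$-bits.
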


It is worth mentioning that the previous lemma enables us to derive as
a corollary an \NP-lower bound for reachability in \BVASS{2}. This is
in contrast to VASS where there is no difference between the
\NL-completeness of reachability in dimensions one and
two~\cite{VP75,ELT16}.
\begin{restatable}{corollary}{corNPBVASSTwo}
  Reachability in \BVASS{2} is \NP-hard.
\end{restatable}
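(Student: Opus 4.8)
The plan is to reduce from reachability in \BVASSone{} with a binary initial configuration, which is \NP-hard by \cref{prop:np-hardness}; recall that in that setting all counter updates are already from $\{-1,0,1\}$. Given such an instance, consisting of a \BVASSone{} $\B=(Q,\Delta,F)$ and a target $q(n)$ with $n$ written in binary on $m$ bits, I would construct a \BVASS{2} $\B'$ whose only target configuration has both counters equal to $0$, and which is reachable if, and only if, $q(n)$ is reachable in $\B$. The central idea is to use the second counter together with the doubling gadget $(\B_n)_{n\ge 0}$ of \cref{fig:gadget} to regenerate the value $n$ in unary, and to arrange matters so that the exponential value $n$ occurs only at internal nodes while the root stays small.

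Concretely, $\B'$ keeps a copy of $\B$ acting on the first counter, obtained by replacing every update transition $(p,z,p')$ of $\B$ by $(p,(z,0),p')$ and leaving the branching transitions unchanged, so that the second counter is invariant throughout this subsystem. In parallel I add a \emph{generator}: a state $g$ from which a tree of splitting transitions decomposes $n=\sum_{j\in J}2^{j}$ according to its set of bit positions $J\subseteq[0,m-1]$, each summand $2^{j}$ being produced on the second counter by a copy of the gadget $\B_j$, whose distinguished state $q_j$ satisfies $\reach(q_j)=\{2^{j}\}$, with the first counter held at $0$ throughout. Hence $g(0,n')$ is reachable exactly when $n'=n$. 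Finally I add a new root state $r$ together with a lockstep transition $(r,(1,1),r)$ and a branching transition $(r,q,g)$, and take $r(0,0)$ as the target.

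For correctness, starting from $r(0,0)$ the lockstep transition can only descend through configurations $r(k,k)$, after which the branching splits some $r(k,k)$ into the two subsystems. Since the $\B$-subsystem leaves the second counter invariant and must reach $0$ at its accepting leaves, it can only receive a configuration of the form $(k_1,0)$; since the generator leaves the first counter invariant, it receives $(0,k_2)$; and the split forces $k_1=k_2=k$. The generator in turn forces $k=n$, and the $\B$-subsystem is then precisely a reachability tree for $q(n)$. Thus $r(0,0)$ is reachable in $\B'$ if, and only if, $q(n)$ is reachable in $\B$. All updates of $\B'$ lie in $\{-1,0,1\}^{2}$, the target is given in unary, and $\abs{\B'}$ is polynomial in the input size, since each $\B_j$ contributes only $O(j)=O(m)$ states.

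The main obstacle is exactly keeping the target configuration unary: the value $n$ is inherently exponential in the input size, yet counter values are built bottom-up and never decrease when passing a branching node, so in a single dimension any node carrying value $n$ would force the root to carry at least $n$ as well, reintroducing a binary target. The second counter is what breaks this: the lockstep chain lets the generated copy of $n$ on the second counter cancel against the simulated value on the first counter, so that the root can be reset to $(0,0)$. Some care is needed to verify that the branching really forces the clean split into $(n,0)$ and $(0,n)$ — which follows because each subsystem leaves the complementary counter invariant and must zero it at its accepting leaves — and that the generator produces $n$ exactly, rather than merely some reachable value.
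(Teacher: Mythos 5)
Your proposal is correct and matches the paper's own argument: the paper likewise adapts \cref{prop:np-hardness} by adding a fresh root that repeatedly applies a lockstep $(1,1)$ update and then branches into the simulated \BVASSone{} (acting on the first counter) and a gadget state generating exactly $(0,t)$ on the second counter. Your write-up merely spells out the clean-split argument (each subsystem leaves the complementary counter invariant, so the branching forces both counters to equal the generated value) that the paper leaves implicit in its proof sketch.
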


\section{Reachability in \BVASSone}\label{sec:reachability}
Here, we show that reachability in \BVASSone\ is decidable in
polynomial time, thereby establishing the \P\ upper bound claimed in
\cref{thm:main}. In the first part, we consider a variation of the
reachability problem in which we are only interested in reaching
configurations that are sufficiently large and lie in a certain
residue class. Subsequently, we will apply this intermediate result
for showing that reachability can be witnessed by small partial
reachability trees. Finally, we put everything together in order to
obtain a polynomial-time algorithm.

\subsection{The Residue Reachability Problem}\label{sec:residue}
A cornerstone of our algorithm for reachability in \BVASSone\ is the
polynomial-time decidability of the following variant of the
reachability problem for \BVASSone:
\medskip
\problemx{Residue Reachability for \BVASSone}
{A \BVASSone\ $\B=(Q,\Delta,F)$, a configuration $q_0(n_0)$ and $d\geq 1$, 
  where $n_0$ and $d$ are given in unary.}
{Does there exist some $n\geq n_0$ such that $q_0(n)$ is reachable
and $n\equiv n_0\bmod d$?}
\medskip
The main result of this section is that residue reachability for
\BVASSone\ is decidable in polynomial time. Notice that setting $d=1$
allows for checking whether there exists some $n\ge n_0$ such that
$q(n)$ is reachable. We first introduce some auxiliary definitions
that allow us to abstract away concrete counter values of reachability
trees. A \emph{partial $d$-residue tree} is a binary tree $T\colon
U\rightarrow Q\times \Z_d$,
where each inner node $u\in U$ with
$T(u)=(q,n)$ satisfies precisely one of the following conditions:
\begin{enumerate}[(i)]
\item $u0,u1\in U$, and if $T(u0)=(p,m_0)$ and $T(u1)=(p',m_1)$ then
  $n\equiv m_0 + m_1 \bmod d$ and $(q,p,p')\in\Delta$;
\item $u0\in U, u1\not\in U$, and if $T(u0)=(p,m)$ then $m=n+z \bmod
  d$ and $(q,z,p)\in\Delta$.
\end{enumerate}
\medskip
We call a configuration from $Q\times \Z_d$ a \emph{residue
  configuration}. Given a set of configurations $S$, its
\emph{residue} is $S/\Z_d\defeq \{ (q,n\bmod d)\in Q\times \Z_d \mid q(n)\in S\}$. 
Likewise, given a partial reachability tree
$T\colon U\rightarrow Q\times\N$, the \emph{residue $T/\Z_d$ of $T$}
is $T/\Z_d\colon U\rightarrow Q\times \Z_d$, where $T/\Z_d(u)\defeq
T(u)/\Z_d$ for all $u\in U$. Clearly, $T/\Z_d$ is a partial residue tree. 

For the remainder of this section, fix some
\BVASSone\ $\B=(Q,\Delta,F)$, some configuration $q_0(n_0)$ and some
$d\geq 1$, where $n_0$ and $d$ are given in unary. In order to decide
residue reachability, one might be tempted to start with an initial
configuration and then to repeatedly apply transitions of $\B$ modulo
$d$ until the desired residue configuration is discovered. Such an
approach would, however, not be sound as it may lead to residue
configurations that, informally speaking, can only be obtained by
forcing the counter to drop below zero. Also, the simple alternative
of constructing a sufficiently large reachability tree is futile as it
may be of exponential size, cf.~\cref{sec:lower}. In order to balance
between those two extremes, we introduce reachability trees in which
all nodes except of the root are required to be bounded by some value
$j\in \N$: a partial reachability tree $T\colon U\rightarrow
Q\times\N$ is \emph{almost $j$-bounded} if $T(u)\in Q\times[0,j]$ for
all $u\in U\setminus\{\varepsilon\}$.  Note that every $j$-bounded
partial reachability tree is almost $j$-bounded.  The following
constant will be particularly useful:
\[
N\defeq |Q|\cdot d.
\]
Moreover, by $S$ we denote the set of configurations for which there
exists an $(n_0+N)$-bounded reachability tree and define for $i<j$:
\begin{align*}
  S  & \defeq  \{ (q,m)\in Q\times\N \mid q(m) \text{ has an } 
  (n_0+N)\text{-bounded reachability tree}\}\\
  S{[i,j]} & \defeq S \cap Q \times [i,j].
\end{align*}

\begin{restatable}{lemma}{lemComputationS}\label{lem:computationS}
  The set $S$ is computable in polynomial time.
\end{restatable}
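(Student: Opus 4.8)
The plan is to characterise $S$ as the least fixed point of a monotone one-step operator and to compute it by saturation over a polynomially small domain. First I would observe that, since an $(n_0+N)$-bounded reachability tree bounds its root as well, every configuration in $S$ has counter value at most $n_0+N$; hence $S\subseteq Q\times[0,n_0+N]$, a set whose cardinality $|Q|\cdot(n_0+N+1)$ is polynomial in $|\B|$ because $n_0$ and $d$, and therefore $N=|Q|\cdot d$, are given in unary. This containment is exactly what makes an efficient fixed-point computation possible: $S$ lives in a polynomial-size state space, and membership is closed under a single transition step.

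Next I would set up the saturation. Initialise $S_0\defeq\{(f,0)\mid f\in F\}$, the set of accepting leaves, each witnessed by the single-node reachability tree. Then repeatedly add to the current set $S_i$ every pair $(q,m)$ with $m\le n_0+N$ that is justified by one transition step: either there is a unary transition $(q,z,p)\in\Delta$ with $(p,m+z)\in S_i$ and $0\le m+z\le n_0+N$, or there is a branching transition $(q,p,p')\in\Delta$ together with a split $m=m_0+m_1$ such that $(p,m_0),(p',m_1)\in S_i$. The sequence $S_0\subseteq S_1\subseteq\cdots$ is monotone inside the domain $Q\times[0,n_0+N]$, so it stabilises after at most $|Q|\cdot(n_0+N+1)$ productive rounds; each round inspects polynomially many transitions and, for each branching transition, the at most $m+1\le n_0+N+1$ splits of $m$, so the total running time is polynomial.

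For correctness I would argue both inclusions by induction. Soundness, that every pair added lies in $S$, follows by induction on the round in which a pair is introduced: the witnessing tree is either the single-node tree of an accepting leaf, or is assembled by attaching the tree(s) for the premise(s) below a fresh root labelled $(q,m)$; since $m\le n_0+N$ and the subtrees are $(n_0+N)$-bounded, so is the result. Completeness, that every element of $S$ is eventually added, follows by induction on the height of an $(n_0+N)$-bounded reachability tree for $q(m)$: a leaf forces $q\in F$ and $m=0$, matching $S_0$, while an inner branching or unary node yields strictly shorter $(n_0+N)$-bounded trees for its child or children, which by induction are captured by $S$, whence the corresponding closure rule adds $(q,m)$.

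The routine part is the tree surgery in these two inductions. I expect the only point demanding care to be the branching rule, where one must check that both subtrees remain $(n_0+N)$-bounded — which is immediate from $m_0,m_1\le m\le n_0+N$ — and that enumerating all splits $m=m_0+m_1$ keeps each round polynomial, which again relies on $n_0$ and $d$ being unary. No genuinely hard step arises; the entire content is the observation that $S$ inhabits a polynomial-size domain and is the least fixed point of a monotone operator whose one-step closure is computable in polynomial time.
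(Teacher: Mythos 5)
Your proof is correct and takes essentially the same approach as the paper: the paper's own proof is a one-liner noting that $S\subseteq Q\times[0,n_0+N]$, that $N$ is polynomial since $n_0$ and $d$ are given in unary, and that $S$ is then computable ``by a dynamic programming approach.'' Your least-fixed-point saturation, together with the soundness induction on rounds and the completeness induction on tree height, is precisely that dynamic programming spelled out in detail.
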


For any set of residue configurations (modulo $d$) $V,W\subseteq
Q\times\Z_d$, we define the following sets that contain the result of
an application of a transition of $\B$ modulo $d$:
\begin{align*}
  \Delta(V) & \defeq\{(q,r-z\bmod{d})\mid (q,z,p)\in\Delta,(p,r)\in V\}\\
  \Delta(V,W) &
  \defeq\{(q,r_0+r_1\bmod{d})\mid(q,p_0,p_1)\in\Delta,(p_0,r_0)\in
  V,(p_1,r_1)\in W\}.
\end{align*}

Next, we inductively define a sequence of sets $R_i \subseteq Q\times
\Z_d$ for $i\geq 0$ whose fixed point will allow for deciding residue
reachability. The set $R_0$ consists of those pairs of control states
and residue classes that can be witnessed by a reachability tree that
is almost $(n_0+N)$-bounded and whose root has a counter value at
least $n_0+N$, and the $R_i$ for $i>0$ are obtained by application
of $\Delta:$
\begin{align*}
  R_0 & \defeq \{(q,n\bmod d)\in Q\times\Z_d\mid\\ & 
  ~~~~~~~~~~~~~~~~~ n\ge n_0+N, \text{
    $q(n)$ has an almost } (n_0+N)\text{-bounded reachability tree}\}\\
  R_{i+1} & \defeq
  R_i\cup
  \Delta(R_i)\cup\Delta(R_i,S/\Z_d)\cup\Delta(S/\Z_d,R_i)\cup\Delta(R_i,R_i).
\end{align*}
Since the cardinality of each $R_i$ is at most $N$, it
is easily seen that the sequence $(R_i)_{i\ge 0}$ reaches a fixed
point which can be computed in polynomial time.
\begin{restatable}{lemma}{lemFixedPointComputation}\label{lem:fixed-point-computation}
  The fixed point $R\defeq \bigcup_{i\geq 0}R_i$ equals $R_N$ and is
  computable in polynomial time.
\end{restatable}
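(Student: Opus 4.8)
The plan is to show two things: first that the sequence $(R_i)_{i\ge 0}$ stabilises after at most $N$ steps, and second that each step is computable in polynomial time, from which the overall polynomial-time bound follows.

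\medskip

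For stabilisation, the key observation is that each $R_i$ is a subset of $Q\times\Z_d$, so $\abs{R_i}\le \abs{Q}\cdot d = N$. The defining recurrence makes the sequence monotone, i.e. $R_i\subseteq R_{i+1}$ for all $i$, since $R_{i+1}$ is defined as $R_i$ together with additional elements. A monotone sequence of subsets of a set of size $N$ can strictly increase at most $N$ times before it must stabilise. I would argue that if $R_{i+1}=R_i$ for some $i$, then $R_j=R_i$ for all $j\ge i$: this is immediate because the operators $\Delta(\cdot)$ and $\Delta(\cdot,\cdot)$ depend only on the current set, so $R_{i+1}$ is a function of $R_i$, and once the input stops changing so does the output. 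Since the sequence is non-decreasing and bounded by $N$ in cardinality, it must reach its fixed point by index $N$ at the latest, giving $R=R_N$.

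\medskip

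For the polynomial-time bound, I would observe that the initial set $R_0$ is computable in polynomial time: by \cref{lem:computationS} the set $S$ (and hence $S/\Z_d$) is computable in polynomial time, and $R_0$ asks precisely for those residue configurations $(q,n\bmod d)$ with $n\ge n_0+N$ admitting an almost $(n_0+N)$-bounded reachability tree. The almost-boundedness condition means that away from the root all counter values lie in $[0,n_0+N]$, so one can compute $R_0$ by a fixpoint iteration over the polynomially many bounded configurations together with a single possibly-large root value whose residue is all that matters; in particular $R_0$ can be read off from $S$ via one application of the transition relation, so it is polynomial-time computable. Given $R_i$, computing $R_{i+1}$ requires evaluating $\Delta(R_i)$, $\Delta(R_i,S/\Z_d)$, $\Delta(S/\Z_d,R_i)$ and $\Delta(R_i,R_i)$, each of which inspects every transition in $\Delta$ against the at most $N$ elements of the relevant residue sets; this takes time polynomial in $\abs{\B}$ and $d$. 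Since we perform at most $N$ such steps, and $N=\abs{Q}\cdot d$ is polynomial in the (unary) input, the total running time is polynomial.

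\medskip

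The main obstacle I anticipate is not the complexity bookkeeping but rather the clean justification that $R_0$ itself is polynomial-time computable, since its definition refers to reachability trees whose root may carry an arbitrarily large counter value. The resolution is to lean on \cref{lem:computationS}: the almost $(n_0+N)$-bounded trees differ from $(n_0+N)$-bounded trees only at the root, so $R_0$ is obtained from $S/\Z_d$ by allowing one final transition that pushes the root above the $n_0+N$ threshold while controlling the residue modulo $d$. Everything else — monotonicity, the cardinality cutoff at $N$, and the cost of each iteration — is then routine.
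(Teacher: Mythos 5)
Your proof is correct and follows essentially the same approach as the paper: monotonicity together with the cardinality bound $\abs{R_i}\le N$ gives stabilisation at $R_N$, and since $N=\abs{Q}\cdot d$ is polynomial in the unary input and each iteration step costs polynomial time, the fixed point is computable in polynomial time. The only (harmless) difference is how the computability of $R_0$ is justified: the paper computes it directly by dynamic programming analogously to \cref{lem:computationS}, whereas you derive it from $S$ by a single application of the transition relation at the root --- a valid alternative, since in an almost $(n_0+N)$-bounded reachability tree whose root value is at least $n_0+N\ge 1$ the root cannot be a leaf and only the root may exceed the bound, so the subtrees at its children are $(n_0+N)$-bounded and hence recorded in $S$.
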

In particular, $R$ together with $S$ yields the whole residue
reachability set.
\begin{restatable}{lemma}{lemModComputability}\label{lem:mod-computability}
  The set $X\defeq R\cup S[n_0,n_0+N]/\Z_d$ is computable in
  polynomial time. Moreover,
  \[
  X= \{(q,n\bmod d)\mid q\in Q,n\in\reach(q),n\geq n_0\}.
  \]
\end{restatable}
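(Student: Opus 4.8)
The plan is to prove the two assertions separately, deriving polynomial-time computability directly from the earlier lemmas and establishing the stated identity $X=Y$, where $Y\defeq\{(q,n\bmod d)\mid q\in Q,\ n\in\reach(q),\ n\ge n_0\}$, by two inclusions. Computability is immediate: by \cref{lem:computationS} the set $S$ is computable in polynomial time, hence so is its restriction $S[n_0,n_0+N]$ and the residue $S[n_0,n_0+N]/\Z_d$; by \cref{lem:fixed-point-computation} the set $R$ is computable in polynomial time; the union $X$ is therefore computable in polynomial time as well.

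For soundness ($X\subseteq Y$) I would first dispatch the $S$-part: any $(q,r)\in S[n_0,n_0+N]/\Z_d$ arises from some $(q,m)\in S$ with $m\in[n_0,n_0+N]$ and $m\equiv r\bmod d$, and since an $(n_0+N)$-bounded reachability tree is in particular a reachability tree, $m\in\reach(q)$ with $m\ge n_0$, so $(q,r)\in Y$. For the $R$-part the key is a counting invariant proved by induction on $i$: for every $(q,r)\in R_i$ there is a witness $x\ge n_0+N-i$ with $x\equiv r\bmod d$ and $q(x)$ reachable. The base case $R_0$ is immediate from its definition (its witnesses satisfy $x\ge n_0+N$), and the step inspects the five parts of $R_{i+1}$: prepending a unary transition shifts a witness by $z\in\{-1,0,1\}$ and thus costs at most one unit of budget, while the branching operators combine two witnesses and only increase the counter. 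Since $R=R_N$ by \cref{lem:fixed-point-computation}, the invariant with $i=N$ yields witnesses $x\ge n_0$, so $R\subseteq Y$. An essential point here is that the bound $R=R_N$ is exactly what keeps the budget $N-i$ nonnegative, so that the counter never needs to drop below $n_0$; in particular no pumping argument is required for soundness, which one might at first expect.

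For completeness ($Y\subseteq X$) the heart is the following structural claim, proved by induction on the size of the tree and, notably, \emph{without} any lower bound on the root counter: if $T$ is a reachability tree whose root is labelled $(q,n)$ and $T$ is not $(n_0+N)$-bounded, then $(q,n\bmod d)\in R$. If $T$ is almost $(n_0+N)$-bounded, then only its root may exceed the bound, forcing $n>n_0+N$, and $(q,n\bmod d)\in R_0$ by definition. Otherwise some non-root node exceeds $n_0+N$, and I would apply the transition at the root and recurse. For a unary transition $(q,z,p)$ the unique child subtree inherits the offending node, so it is unbounded and the induction hypothesis places its root residue in $R$; applying $\Delta(\cdot)$ recovers $(q,n\bmod d)$. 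For a branching transition the offending node lies in one child subtree, which is thus unbounded and handled by the hypothesis, whereas the sibling subtree is either $(n_0+N)$-bounded, contributing its root residue through $S/\Z_d$, or unbounded, again handled by the hypothesis; the operators $\Delta(\cdot,S/\Z_d)$, $\Delta(S/\Z_d,\cdot)$ and $\Delta(\cdot,\cdot)$ then yield $(q,n\bmod d)\in R$, using that $R=R_N$ is closed under all four operators. Granting this claim, I finish by taking any witness tree $T$ for a reachable $q(n)$ with $n\ge n_0$ and $n\equiv r\bmod d$: if $T$ is $(n_0+N)$-bounded then $n\le n_0+N$, whence $(q,n)\in S[n_0,n_0+N]$ and $(q,r)\in S[n_0,n_0+N]/\Z_d$; otherwise the claim gives $(q,r)\in R$. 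In both cases $(q,r)\in X$.

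The main obstacle I anticipate is framing the completeness induction correctly. The tempting approach of splitting on whether the root counter $n$ lies below or above $n_0+N$ fails, because a configuration with a \emph{small} counter can nonetheless force arbitrarily large counters inside every one of its reachability trees, as witnessed by the family $\B_n$ of \cref{sec:lower}; such configurations must be accounted for by $R$ rather than by $S$. The correct pivot is instead boundedness of the \emph{tree}, and the induction must carry no constraint on the root counter so that it can descend into child subtrees whose roots may be smaller than $n_0$. Verifying that the seed $R_0$ captures precisely the almost-bounded-but-unbounded trees, and that every recursive step matches one of the four closure operators defining the $R_i$, is the part that will require the most care.
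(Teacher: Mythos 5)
Your proposal is correct, and its computability and soundness parts coincide with the paper's proof verbatim: the same appeal to \cref{lem:computationS,lem:fixed-point-computation}, and the same budget invariant (for $(q,r)\in R_i$ there is a reachable witness $n\ge n_0+N-i$ with $n\equiv r\bmod d$), with unary steps costing one unit and branching steps only adding. Where you genuinely diverge is the completeness direction. The paper fixes a witnessing tree $T$, considers the node set $V\defeq\{u\mid T(u)\notin S\}$, argues that the $\preceq$-maximal nodes of $V$ have residues in $R_0$, and propagates upward through $V$ using the closure of $R$ under the $\Delta$-operators. You instead prove, by structural induction on the size of the tree and with no hypothesis on the root counter, that any non-$(n_0+N)$-bounded reachability tree has root residue in $R$: the base case is an almost-bounded tree (root forced above $n_0+N$, hence $R_0$), and the step recurses into a child whose subtree is unbounded, routing the sibling through $S/\Z_d$ if its subtree is bounded and through $R$ otherwise. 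The two arguments rest on the same two pillars ($R_0$ as seed, $\Delta$-closure of $R=R_N$ as propagation), but your bookkeeping is arguably cleaner and more robust: the paper's claim that $V$ is prefix-closed is not literally true (membership of $T(u)$ in $S$ asserts the existence of \emph{some} bounded tree for that configuration, and does not constrain the subtree of $T$ below $u$, which may still contain nodes exceeding the bound), and its propagation step does not explicitly cover a node of $V$ both of whose children lie outside $V$; these rough edges are repairable (such nodes again land in $R_0$ by composing the children's bounded trees), and your case split on almost-boundedness of the subtree handles exactly these situations without needing any repair.
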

\begin{proof}[Proof (sketch)]
  Polynomial-time computability of $X$ follows immediately from
  \cref{lem:computationS,lem:fixed-point-computation}.  The proof of
  the stated equality is quite technical though not too difficult and
  \iftechrep{deferred to the appendix}{can be found in the technical
    report~\cite{GHLT16}}. The crucial part for the inclusion
  ``$\subseteq$'' is to show that for every $i\in[0,N]$ and each
  $(q,r)\in R_i$ there exists some $n\in\reach(q)$ with $n\geq
  n_0+N-i$ and $n\equiv r\bmod d$ by induction on $i$. For the
  converse inclusion the only interesting case is when a potential
  reachability tree $T$ is not $(n_0+N)$-bounded. One first shows that
  all $\prec$-maximal nodes $u$ in $T$ with $T(u)\not\in S$ satisfy
  $T(u)/\Z_d\in R_0$ and uses the fact that $\Delta(R,R)\subseteq R$
  and $\Delta(R)\subseteq R$ to conclude $T(\varepsilon)/\Z_d\in R$.
\end{proof}
The main result of this section now follows directly
from \cref{lem:mod-computability}.
\begin{theorem}\label{prop:residue}
  Residue reachability for \BVASSone\ is decidable in polynomial time.
\end{theorem}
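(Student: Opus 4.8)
The plan is to observe that, given the characterisation supplied by \cref{lem:mod-computability}, residue reachability reduces to a single membership test. Recall that the residue reachability problem asks whether there exists some $n\geq n_0$ such that $q_0(n)$ is reachable and $n\equiv n_0\bmod d$. Phrased in terms of reachability sets, this is exactly the question of whether there is some $n\in\reach(q_0)$ with $n\geq n_0$ and $n\bmod d=n_0\bmod d$; equivalently, whether the residue configuration $(q_0,n_0\bmod d)$ lies in the set
\[
\{(q,n\bmod d)\mid q\in Q,\ n\in\reach(q),\ n\geq n_0\}.
\]

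First I would invoke \cref{lem:mod-computability}, which identifies this set with $X\defeq R\cup S[n_0,n_0+N]/\Z_d$ and guarantees that $X$ is computable in polynomial time. The algorithm for residue reachability is then immediate: compute $X$, and answer \emph{yes} precisely when $(q_0,n_0\bmod d)\in X$. Since $X\subseteq Q\times\Z_d$ has at most $N=|Q|\cdot d$ elements, which is polynomial in the (unary-encoded) input, both the computation of $X$ and the final membership test run in polynomial time, which establishes the claim.

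The hard part of this result is not located in this final step at all, but in the preceding development that I would be assuming here: the soundness and completeness encoded in the equality asserted by \cref{lem:mod-computability}, together with the polynomial-time computability of $S$ (\cref{lem:computationS}) and of the fixed point $R$ (\cref{lem:fixed-point-computation}). In particular, the subtle point that the naive ``apply transitions modulo $d$'' strategy is \emph{unsound} --- because it may silently force the counter below zero --- is exactly what the almost-$(n_0+N)$-boundedness condition built into $R_0$, and the argument that the $\prec$-maximal non-$S$ nodes of a witnessing tree must have residue in $R_0$, are designed to circumvent. Once those lemmas are in hand, the theorem follows with no further work, so I would present it simply as a corollary of \cref{lem:mod-computability}.
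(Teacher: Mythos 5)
Your proposal is correct and takes exactly the paper's route: the paper also derives \cref{prop:residue} immediately from \cref{lem:mod-computability}, since membership of $(q_0,n_0\bmod d)$ in the polynomial-time computable set $X$ is precisely the residue reachability question. You have also correctly located where the real work lies, namely in \cref{lem:computationS,lem:fixed-point-computation,lem:mod-computability}, which the theorem simply packages.
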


\subsection{Expandable Partial Reachability Trees}\label{ssec:expandable}

We now employ our result on residue reachability to show that small
partial reachability trees suffice in order to witness
reachability. The key idea is to identify branches of partial
reachability trees that end in a leaf and which could, informally
speaking, be copied or pumped an arbitrary number of times, thus
achieving a counter value in the leaf that is large enough and lies in
a certain residue class of some modulus. Residue reachability then
witnesses that such a leaf could be completed in order to yield a
reachability tree. For the remainder of this section, fix some
\BVASSone\ $\B=(Q,\Delta,F)$.

Let us first introduce a couple of auxiliary definitions. Given a
partial reachability tree $T\colon U \to Q\times \N$ and $v,w\in
U$, the {\em lowest common ancestor} of $v,w\in U$ is defined as
\[
\lca(v,w)\defeq\max\{u\in U\mid u\preceq v\text{ and } u\preceq w\},
\] 
where the maximum is taken with respect to $\preceq$. Let $T(u)=q(n)$,
we define functions $\state(u)\defeq q$ and $\counter(u)\defeq n$ that
allow us to access the control state and the counter value at $u$,
respectively.
\begin{definition}
  A node $v\in U$ is \emph{increasing} if there is a proper
  ancestor $u\prec v$ such that $\state(u)=\state(v)$ and
  $\counter(u)<\counter(v)$; the maximal such $u$ is called the
  \emph{anchor} of $v$. 
We say that $T$ is \emph{exclusive} if the least common ancestor of any two
distinct increasing leaves is a proper ancestor of at least one of their
anchors.
Finally, we call $T$ \emph{expandable} if
  \begin{itemize}
  \item $T$ is exclusive,
  \item every leaf $v$ of $T$ is either accepting or an increasing
    leaf,
  \item every increasing leaf $v$ with anchor $u$ such that
    $T(v)=q(n)$ and $T(u)=q(m)$ induces a valid instance of the
    residue reachability problem, i.e., $q(l)$ is reachable for some
    $l\geq n$ and $l\equiv n \bmod (n-m)$.
  \end{itemize}
A node $u$ is said to be {\em exclusive} resp.\ {\em expandable}
if $T^{\downarrow u}$ is.
\end{definition}
Observe that nodes cannot be both accepting
and increasing because increasing nodes have strictly positive counter values
and accepting nodes must have counter value zero.
Exclusive and non-exclusive partial reachability trees are illustrated
in \cref{fig:exclusive}(a).
\begin{figure}
  \begin{center}
    \includegraphics[scale=1]{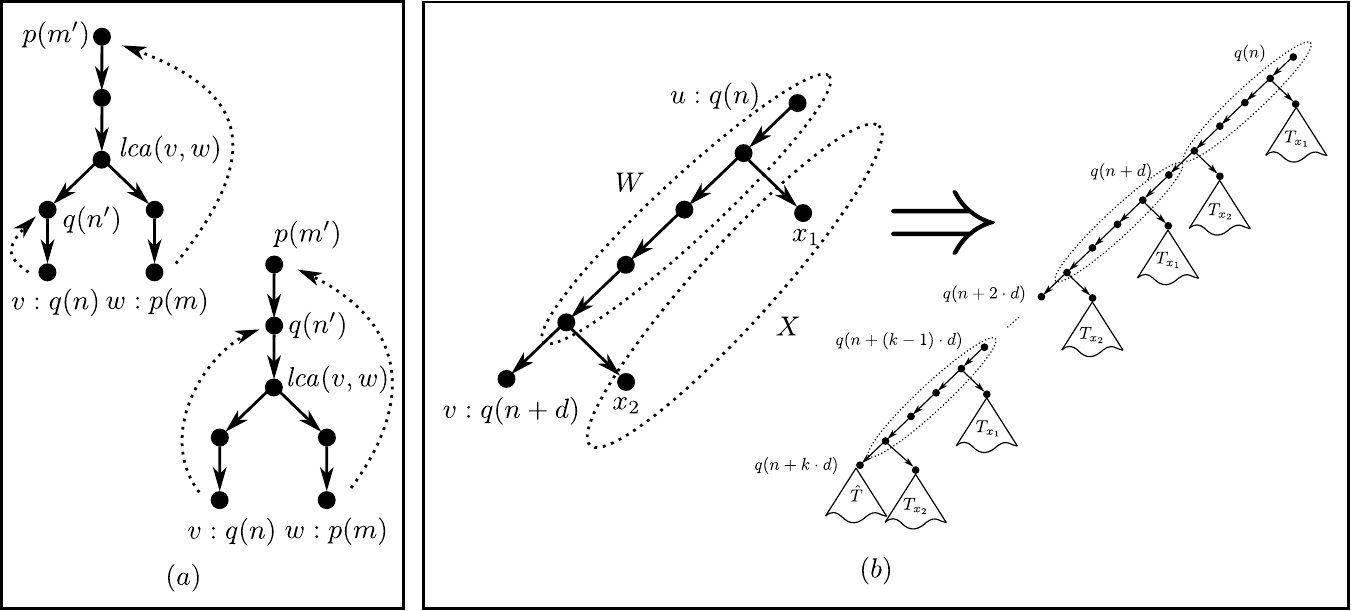}
  \end{center}
  \caption{(a) Illustration an exclusive (top) and a non-exclusive
    (bottom) partial reachability tree. Here, $v$ and $w$ are pumping
    nodes and anchor relationships are depicted as dashed arrows. (b)
    Illustration of the pumping argument in
    \cref{lem:expandable-reachable}.}
  \label{fig:exclusive}
\end{figure}
The next lemma states a useful fact that directly follows from the
pigeon-hole principle: whenever the counter increases on a branch by a
certain amount then the branch contains an increasing node and its
anchor.
\begin{restatable}{lemma}{lemPumpingNodes}\label{lem:pumping-nodes}
  Let $u$ and $v$ be nodes of a partial reachability tree such that
  $u\prec v $ and $\counter(u) + \abs{Q}\le \counter(v)$.  Then there
  exists an increasing node $v'$ with anchor $u'$ such that $u \preceq u'
  \prec v' \preceq v$.
\end{restatable}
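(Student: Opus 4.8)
The plan is to apply the pigeon-hole principle to the counter values along the unique downward path from $u$ to $v$, exploiting that in a \BVASSone\ the counter can grow by at most one along a single parent-to-child step. The first thing I would record is exactly this bounded-growth property: at a branching node we have $\vec n=\vec n_0+\vec n_1$ with both summands non-negative, so each child's counter is at most the parent's; at a unary node the child's counter differs from the parent's by $z\in\{-1,0,1\}$. Writing the path as $u=w_0\prec w_1\prec\cdots\prec w_\ell=v$ and abbreviating $c_i\defeq\counter(w_i)$, this gives $c_{i+1}\le c_i+1$ for every $i$, whereas the hypothesis yields $c_\ell\ge c_0+\abs{Q}$.

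Next I would extract a strictly increasing staircase of $\abs{Q}+1$ nodes. For each threshold $t\in\{c_0+1,\dots,c_0+\abs{Q}\}$ let $f_t$ be the first index along the path at which the counter equals $t$; such an index exists and satisfies $f_t\ge 1$, because the counter starts at $c_0<t$, ends at $c_\ell\ge t$, and increases in unit steps. The unit-step property also forces the first occurrence of $t$ to be immediately preceded by an occurrence of $t-1$, so the crossing indices are strictly monotone, $0<f_{c_0+1}<f_{c_0+2}<\cdots<f_{c_0+\abs{Q}}\le\ell$. Together with $w_0$ (whose counter is $c_0$) this produces $\abs{Q}+1$ nodes with pairwise distinct counter values $c_0,c_0+1,\dots,c_0+\abs{Q}$, listed in increasing order of path depth.

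Now I would apply pigeon-hole to control states. Among these $\abs{Q}+1$ nodes two must share a control state, say nodes $a\prec b$ on the path with $\state(a)=\state(b)$; since their counter values are distinct and ordered by depth, $\counter(a)<\counter(b)$. Hence $b$ is an increasing node and $a$ is a proper ancestor witnessing this. Setting $v'\defeq b$ and letting $u'$ be its anchor, i.e.\ the $\preceq$-maximal such witness, we obtain $u\preceq a\preceq u'\prec v'\preceq v$, which is the desired conclusion.

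I expect the only genuine subtlety to be the bookkeeping at the two endpoints of the range. First, one must check that the growth-by-at-most-one claim really covers \emph{both} transition types, not just unary moves; the branching case is what rules out any upward jump. Second, the anchor $u'$ is only guaranteed to be \emph{some} maximal witness rather than the specific node $a$, so I must argue $u\preceq u'$: this follows because $a$ is one of the staircase nodes and $w_0=u$ sits at the bottom of the staircase, so $u\preceq a\preceq u'$. Everything else is the routine verification that $f_t$ exists, is positive, and is strictly increasing, which is immediate from the unit-step bound.
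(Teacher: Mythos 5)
Your proof is correct and follows essentially the same route as the paper's: both exploit that a child's counter exceeds its parent's by at most one, extract nodes on the path from $u$ to $v$ realising every counter value in $[\counter(u),\counter(v)]$, and apply the pigeon-hole principle to $\abs{Q}+1$ of them to find two with the same control state. If anything, you are more careful than the paper at the last step: the paper simply declares the shallower node of the matching pair to be the anchor, whereas you correctly observe that the anchor is only \emph{some} maximal witness $u'$ satisfying $u\preceq a\preceq u'\prec v'\preceq v$, which is exactly what the statement requires.
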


The following lemma shows that every reachability tree gives rise to
an expandable reachability tree whose nodes have counter values
bounded polynomially in $|\B|$.
\begin{lemma}\label{lem:reachable-expandable}
  Suppose $q(n)$ is reachable and let $B\defeq 2\cdot \abs{Q}+n$. Then
  there exists an expandable $B$-bounded partial reachability tree
  with root $q(n)$.
\end{lemma}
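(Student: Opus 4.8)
```latex
The plan is to start from an arbitrary reachability tree $T$ with root $q(n)$ and transform it into an expandable $B$-bounded partial reachability tree by repeatedly pruning. The guiding intuition is that $T$ may be of exponential size and contain large counter values, but wherever a branch climbs high enough we can, by \cref{lem:pumping-nodes}, locate an increasing node together with its anchor; such a node is a legitimate place to \emph{cut} the tree, turning it into a leaf of the partial tree we are building. The first step is to establish that any such cut produces an increasing leaf that satisfies the third bullet of expandability. Concretely, if $v$ is an increasing leaf with anchor $u$, $T(u)=q(m)$ and $T(v)=q(n')$ with $m<n'$, then because the original $T$ is a genuine reachability tree the subtree $T^{\downarrow v}$ witnesses that $q(n')$ is reachable; since $n'\geq n'$ and trivially $n'\equiv n'\bmod(n'-m)$, this is a valid instance of residue reachability. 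Thus the expandability condition on increasing leaves comes for free from the fact that we cut at nodes that are themselves roots of genuine reachability subtrees.

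Next I would control the counter values. Scan the tree top-down and declare a node a \emph{cut point} as soon as its counter value exceeds some threshold along a branch; \cref{lem:pumping-nodes} guarantees that whenever $\counter(u)+\abs{Q}\le\counter(v)$ on a branch there is an increasing node with its anchor sandwiched between them, so we never have to let a branch rise by more than $\abs{Q}$ above the last controlled value before we find a place to cut. Starting from the root value $n$ and allowing a slack of $\abs{Q}$ before each cut, together with the fact that anchors and increasing nodes are close in counter value, yields the bound $B=2\cdot\abs{Q}+n$: the root contributes $n$, one factor of $\abs{Q}$ is the rise permitted before a pumping pair is forced to appear, and the second factor accounts for the gap that the increasing node itself may sit above its anchor. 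All remaining leaves that are not obtained this way are leaves of the original reachability tree, hence accepting, so the second bullet of expandability (every leaf is accepting or increasing) holds.

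The main obstacle I expect is enforcing \emph{exclusivity}: the requirement that the lowest common ancestor of any two distinct increasing leaves is a proper ancestor of at least one of their anchors. A naive cutting procedure might cut two sibling branches at increasing nodes whose anchors both lie below their $\lca$, violating the condition, as in the non-exclusive picture of \cref{fig:exclusive}(a). The fix is to make the cutting \emph{greedy from the top}: whenever an anchor--increasing-node pair is available whose anchor is as high as possible, cut there and discard everything below, so that any two surviving increasing leaves are separated high enough in the tree that their branching point dominates one of the anchors. I would argue this by induction on the tree structure, processing the two children of each branching node independently once we pass below all shared anchors, and observing that cutting at the \emph{highest} available anchor on each branch guarantees that the $\lca$ of two increasing leaves residing in different subtrees lies strictly above the anchor on at least one side.

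Putting the three pieces together gives an expandable, $B$-bounded partial reachability tree rooted at $q(n)$, as required. The routine calculations are the exact bookkeeping of counter bounds along a branch and the verification that the greedy top-down cut never strands a branch without a valid cut point, both of which follow mechanically from \cref{lem:pumping-nodes} once the cutting strategy is fixed.
```
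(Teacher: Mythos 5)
Your overall plan---prune a genuine reachability tree at increasing nodes supplied by \cref{lem:pumping-nodes}, with the residue-reachability condition coming for free because every cut node roots a genuine reachability subtree---is indeed the paper's approach, and your first paragraph (and the observation that all remaining leaves are accepting) is correct. The genuine gap is in exclusivity, which you rightly identify as the crux but then resolve with a greedy rule pointing in the \emph{wrong direction}. Exclusivity demands that $\lca(v,v')$ be a \emph{proper ancestor} of at least one of the two anchors, i.e.\ at least one anchor must lie strictly \emph{below} the branching point; choosing pairs whose anchors are ``as high as possible'' pushes both anchors towards the root and makes this condition as hard as possible to satisfy. Concretely, suppose a split node $w$ has in its left subtree two increasing nodes, $v_1$ with anchor $u_1\prec w$ and $v_1'$ with anchor $u_1'\succeq w0$, while its right subtree contains a single increasing node $v_2$ whose anchor satisfies $u_2\prec w$. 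Your rule cuts at $v_1$ and $v_2$, and then $\lca(v_1,v_2)=w$ is a proper ancestor of neither $u_1$ nor $u_2$: the resulting tree is not exclusive. The paper's rule is the exact opposite: for each minimal large node it cuts at the \emph{lexicographically maximal} available pair, i.e.\ the one whose anchor is as \emph{deep} as possible, which in this example picks $v_1'$ and preserves exclusivity since $w\prec u_1'$.

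Moreover, even with the correct greedy direction one still needs an argument that a violating configuration cannot arise, namely that whenever two cut leaves lie in different subtrees of a split node $w$ with both anchors at or above $w$, at least one of those subtrees must contain a pair lying entirely below $w$ (otherwise no selection rule whatsoever could help). This is the actual origin of the constant $B=2\cdot\abs{Q}+n$, and it is where your bookkeeping (``the second factor accounts for the gap that the increasing node itself may sit above its anchor'') goes astray. Since $w$ is a split node, $\counter(w)=\counter(w0)+\counter(w1)\le B$, so some child, say $w0$, has $\counter(w0)\le B/2$, while the minimal large node below it (the one whose pair produced the cut leaf on that side) has counter exactly $B$; as $B-B/2\ge\abs{Q}$, \cref{lem:pumping-nodes} yields an increasing node $v''$ with anchor $u''$ satisfying $w0\preceq u''\prec v''$, and this pair has a strictly deeper anchor than any pair anchored at or above $w$, contradicting the maximality of the chosen pair. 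Without this halving argument---and with your rule, which would discard the locally anchored pair even when it exists---the exclusivity claim in your third paragraph does not go through. Note finally that if only $B$-boundedness and the leaf conditions were needed, $\abs{Q}+n$ would already suffice; the extra $\abs{Q}$ in $B$ exists solely so that the halving argument works for every $n$.
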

\begin{proof}
  Let $T$ be a reachability tree with $T(\varepsilon)=q(n)$.  We call
  a node $w$ of $T$ {\em large} if $\counter(w)=B$. We obtain a
  partial reachability tree $T'$ from $T$ as follows. By
  \cref{lem:pumping-nodes}, every large node $w$ gives rise to at
  least one pair of nodes $(u,v)$ such that $u\prec v\preceq w$ and
  $v$ is an increasing node with anchor $u$. For every large node $w$
  that is minimal with respect to $\preceq$, we assign the maximal
  such pair $\pair(w)\defeq(u,v)$ with respect to the lexicographical
  ordering on nodes (more precisely, $(u,v)\preceq(u',v')$ if either,
  $u\prec u'$, or $u=u'$ and $v\preceq v'$).  Let $T'\colon
  U'\rightarrow Q\times\N$ denote the tree that one obtains from $T$
  by replacing all subtrees of $T$ that are rooted at some node $v$
  such that $\pair(w)=(u,v)$ for some minimal (with respect to
  $\preceq$) large node $w$ in $T$ by $\{v\}$ itself, i.e.\ such nodes
  $v$ become leaves. We now prove that $T'$ is $B$-bounded and
  exclusive:
  \begin{itemize}
  \item $T'$ is $B$-bounded since the $w$ above are chosen minimal
    with respect to $\preceq$ and hence $\counter(u)\le B$ for all nodes
  $u\in U'$.
  \item $T'$ is exclusive, which can be seen as follows. Striving
    for a contradiction, suppose that $T'$ is not exclusive. Then
    there are distinct increasing nodes $v,v'$ with anchors $u,u'$ such
    that $u,u' \preceq w \defeq \lca(v,v')$. Since
    $\counter(w)=\counter(w0) + \counter(w1) \le B$, we have
    $\counter(w0)\le B/2$ or $\counter(w1)\le B/2$, and assume without
    loss of generality that $\counter(w0)\le B/2$. 
    Since $B-B/2\ge \abs{Q}$, by \cref{lem:pumping-nodes} there is another
    increasing node $v''$ with anchor $u''$ such that $w0\preceq u'' \prec v''$,
    contradicting the assumed maximality of $(u,v)$.
  \item Every leaf is accepting or increasing, by definition of $T'$.
  \item Finally, every increasing leaf $u$ in $T'$ induces a positive
    residue-reachability instance. Since $T$ is a reachability tree,
    we have that $T(u)$ is reachable and thus $T'(u)$ is reachable.
    So in particular, it is reachable modulo $d=1$, i.e.\ if
    $T'(u)=q(n)$, then we can choose $(q(n),1)$ as the required valid
    instance of residue reachability.\qedhere
  \end{itemize}
\end{proof}

We now turn towards the converse direction and show that every
expandable tree witnesses reachability. We first state an auxiliary
lemma about structural properties of nodes in exclusive trees whose
proof can be found in the \iftechrep{appendix}{technical
  report~\cite{GHLT16}}.
\begin{restatable}{lemma}{lemWellNodes}\label{lem:well-nodes}
  For every node $u$ of an expandable partial reachability tree the
  following hold:
  \begin{enumerate}[(i)]
  \item If $u$ is the anchor of an increasing leaf $v$ then $u$ is
    expandable and all nodes $w$ such that $u\prec w\preceq v$ are
    not expandable.
  \item $u$ has at most one child that is not expandable.
  \end{enumerate}
\end{restatable}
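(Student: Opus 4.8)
The plan is to prove both parts of \cref{lem:well-nodes} by analysing how the properties defining expandability are inherited by subtrees, keeping in mind that the notions of \emph{increasing} node, \emph{anchor}, and \emph{exclusive} are all defined relative to the ancestor structure \emph{within a given tree}, so the key subtlety is understanding how these notions restrict to a subtree $T^{\downarrow u}$.

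For part (i), suppose $u$ is the anchor of an increasing leaf $v$; I want to show $T^{\downarrow u}$ is expandable. Each of the three defining conditions of expandability must be checked for $T^{\downarrow u}$. First, every leaf of $T^{\downarrow u}$ is also a leaf of $T$, hence accepting or increasing in $T$; I must argue that an increasing leaf of $T$ lying below $u$ is also increasing \emph{within} $T^{\downarrow u}$, i.e.\ that its anchor also lies below $u$ (otherwise the increasing-leaf condition and the residue-reachability condition might fail relative to the subtree). This is where exclusivity is used: because $v$ is an increasing leaf with anchor $u$, exclusivity forces the anchor of any \emph{other} increasing leaf $v'$ below $u$ to lie strictly below $\lca(v,v')$, and since both $v,v'\preceq$ the relevant descendants of $u$, one shows the anchor of $v'$ cannot be a strict ancestor of $u$, so it sits inside $T^{\downarrow u}$. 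The residue-reachability condition is preserved because it depends only on the pair (anchor, increasing leaf) and their counter values, which are unchanged by passing to the subtree. Exclusivity of $T^{\downarrow u}$ then follows because the increasing leaves of $T^{\downarrow u}$ form a subset of those of $T$ with the same anchors, and the lowest-common-ancestor relation restricts correctly. For the second assertion of (i), that no node $w$ with $u\prec w\preceq v$ is expandable: the point is that $v$ is an increasing leaf of $T^{\downarrow w}$ but its anchor $u$ lies \emph{strictly above} $w$, so within $T^{\downarrow w}$ the leaf $v$ has no anchor at all, hence is neither accepting nor increasing in $T^{\downarrow w}$, violating the second expandability condition.

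For part (ii), I argue by contradiction: suppose $u$ has two children $u0,u1$ both of which are non-expandable. By the contrapositive of the analysis in (i), a subtree $T^{\downarrow ui}$ fails to be expandable precisely when it contains a leaf that is increasing in $T$ but whose anchor is a strict ancestor of $ui$ — equivalently, an increasing leaf $v_i$ below $ui$ whose anchor $u_i'$ satisfies $u_i' \prec ui$. (Here I rely on the fact, to be checked, that this is the \emph{only} way expandability can break in a subtree of an expandable tree, since the other two defining conditions are inherited unconditionally.) Taking such increasing leaves $v_0,v_1$ below $u0,u1$ respectively, their lowest common ancestor is $u$, and both anchors $u_0',u_1'$ are ancestors of $u$; but then $\lca(v_0,v_1)=u$ is not a strict ancestor of either anchor, contradicting exclusivity of $T$.

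The main obstacle I expect is the careful bookkeeping of \textbf{which} ancestor-relative notions survive passage to a subtree: the definitions of increasing, anchor and exclusive are all relative to the host tree, and the whole argument hinges on the dichotomy that an increasing leaf of the big tree is increasing in the subtree \emph{iff} its anchor also lies in the subtree. Pinning down this characterisation cleanly — and in particular establishing that failure of expandability in a subtree is always caused by an ``orphaned'' increasing leaf whose anchor has been cut off — is the crux; once that characterisation is in hand, both (i) and (ii) reduce to short arguments invoking exclusivity of $T$.
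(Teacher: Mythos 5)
Your proposal is correct and takes essentially the same route as the paper's proof: part~(i) via the observation that exclusivity of $T$ forces the anchor of every other increasing leaf below $u$ to lie strictly below $u$ (so all three expandability conditions pass to $T^{\downarrow u}$, while anchor maximality orphans $v$ in every $T^{\downarrow w}$ with $u\prec w\preceq v$), and part~(ii) via the contradiction that orphaned increasing leaves $v_0,v_1$ under both children would give $\lca(v_0,v_1)=u$ with both anchors $\preceq u$, violating exclusivity of $T$. Your explicit characterisation that a subtree of an expandable tree fails to be expandable exactly when some increasing leaf's anchor is cut off is in fact a slightly sharper rendering of the step the paper words as ``this can only be if both are not exclusive'', so no gap remains beyond writing out the one-line maximality argument you already flag.
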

\vspace{0.2cm} 
The previous lemma enables us to show that an expandable partial
reachability tree implies the existence of a reachability tree.

\begin{lemma}\label{lem:expandable-reachable}
  Let $T\colon U\to Q\times \N$ be an expandable partial reachability
  tree. Then for all $u\in U$, $T(u)$ is reachable or $u$ is not
  expandable.
\end{lemma}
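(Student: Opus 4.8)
The plan is to prove the equivalent statement that every \emph{expandable} node $u$ satisfies that $T(u)$ is reachable, and then to instantiate it at $u=\varepsilon$. I proceed by induction on the height $h(u)$ of the subtree $T^{\downarrow u}$. For the base case $h(u)=0$ the node $u$ is a leaf, and since within $T^{\downarrow u}$ it has no proper ancestor it cannot be increasing; expandability therefore forces $u$ to be accepting, so $T(u)$ is an accepting configuration and hence reachable. For the inductive step let $u$ be an expandable inner node. By \cref{lem:well-nodes}(ii), $u$ has at most one non-expandable child. If all children of $u$ are expandable, then by the induction hypothesis their configurations are reachable, and combining the corresponding reachability trees via the (branching or unary) transition of $\B$ applied at $u$ in $T$ yields a reachability tree with root $T(u)$; hence $T(u)$ is reachable.

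The interesting case is when $u$ has exactly one non-expandable child $u_b$. I first claim that $u$ is then the anchor of some increasing leaf. Indeed, were no increasing leaf of $T^{\downarrow u_b}$ to have $u$ as its anchor, every increasing leaf of $T^{\downarrow u_b}$ would keep its anchor strictly below $u$, that is, inside $T^{\downarrow u_b}$; consequently $T^{\downarrow u_b}$ would inherit exclusivity, the leaf condition and the residue condition from $T^{\downarrow u}$ and so be expandable, contradicting the choice of $u_b$. (Equivalently, one may follow the unique chain of non-expandable children granted by \cref{lem:well-nodes}(ii) down to a leaf, which must be increasing rather than accepting, and locate its anchor by \cref{lem:well-nodes}(i) to find that it is $u$.) Fix such an increasing leaf $v$ and write $T(u)=q(m)$ and $T(v)=q(n)$, with $\delta\defeq n-m>0$.

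It remains to run the pumping argument depicted in \cref{fig:exclusive}(b). Along the path $u=w_0\prec w_1\prec\dots\prec w_t=v$, every $w_i$ with $i\ge 1$ is non-expandable by \cref{lem:well-nodes}(i), so by \cref{lem:well-nodes}(ii) the off-path child hanging at each branching $w_i$ is expandable and, having smaller height than $T^{\downarrow u}$, is reachable by the induction hypothesis. View the path together with these off-path subtrees as a context $C$ that, run from counter value $x$ at its root, propagates the increments of $\B$ top-down and delivers counter value $x+\delta$ at the hole $v$, while leaving the off-path splits, and hence the off-path subtrees, at their original reachable configurations. Expandability guarantees that the residue-reachability instance induced by $v$ is positive, so $q(l)$ is reachable for some $l=n+k\delta$ with $k\ge 0$. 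Stacking $k+1$ successive copies of $C$, run from root values $m,n,n+\delta,\dots$ so that the holes carry $n,n+\delta,\dots,l$, then plugging a reachability tree for $q(l)$ into the last hole and replacing every off-path subtree by a reachability tree for its fixed reachable configuration, produces a reachability tree with root $q(m)=T(u)$. Hence $T(u)$ is reachable, completing the induction.

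The main obstacle is the bookkeeping of this pumping construction: one must verify that shifting each copy of the spine upward by a multiple of $\delta$, by pushing the surplus down the on-path edges while freezing the off-path splits, keeps all counters non-negative and all invoked transitions in $\Delta$, and that the stacked object is a genuine reachability tree with only accepting leaves. The only other delicate point is the claim that a single non-expandable child forces $u$ to be an anchor, for which the structural characterisation in \cref{lem:well-nodes} is precisely what is needed.
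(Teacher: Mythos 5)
Your proof is correct and takes essentially the same route as the paper's: induction on the height $h(u)$, the case split via \cref{lem:well-nodes}(ii) on whether all children of $u$ are expandable, and in the hard case the same pumping construction that stacks shifted copies of the spine from the anchor to the increasing leaf, freezes the off-path subtrees (expandable, hence reachable by the induction hypothesis), and caps the final hole with a reachability tree for the residue-reachability witness $q(l)$. The only differences are cosmetic: you spell out why a non-expandable child forces $u$ to be the anchor of some increasing leaf (a claim the paper asserts without argument), and you leave the final gluing verification at the same ``easily checked'' level of detail as the paper does.
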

\begin{proof}
  We prove the lemma by induction on $h(u)$. For the induction base,
  assume $h(u)=0$, hence $u$ is a leaf. Then $u$ is either accepting
  and thus $T(u)$ is reachable, or $u$ is not accepting and therefore
  an increasing leaf and so $T^{\downarrow u}$
  is not expandable by \cref{lem:well-nodes}(i).

  For the induction step, suppose $u$ is expandable. We distinguish two
  cases:
  \begin{itemize}
  \item {\em All children of $u$ are expandable.} We only treat the
    case when $u$ has two children, the case when $u$ has one child
    follows as a special case. Since the children $u0$ and $u1$ of $u$
    are expandable, by the induction hypothesis there are reachability
    trees $T_0\colon U_0\rightarrow Q\times\N$ and $T_1\colon
    U_1\rightarrow Q\times\N$ with $T_0(\varepsilon)=T(u0)$ and
    $T_1(\varepsilon)=T(u1)$.  We define the following tree $T_u\colon
    V\rightarrow Q\times\N$, where $V\defeq\{0\}U_0\cup\{1\}U_1\cup
    \{\varepsilon\}$, $T_u(\varepsilon)\defeq T(u)$ and $T_u(iv)\defeq
    T_i(v)$ for all $i\in\{0,1\}$.  Now $T_u$ is a reachability tree,
    hence $T_u(\varepsilon)=T(u)$ is reachable.
  \item {\em Some child of $u$ is not expandable.} For simplicity of
    presentation, let $u=\varepsilon$, the cases when
    $u\not=\varepsilon$ can be proven analogously. Moreover, let us
    assume that $T(u)=q(n)$. By \cref{lem:well-nodes}(ii) there
    is at most one such child, without loss of generality let $u0 = 0$
    be this child. Moreover, since $u$ is expandable and $u0$ is not
    expandable it must hold that $u$ is the anchor of some unique
    increasing leaf $v$, we may assume without loss of generality
    $v=u0^\ell$ for some $\ell\geq 1$. We must have $T(v)=q(n+d)$ for
    some $d\geq 1$.  Let $W=\{0^i\mid i\in[0,\ell-1]\}$ be the set all
    nodes in $T$ ``on the path from $u$ to $v$'' without $v$.  Let
    $X\defeq\{0^i1\in U\mid i\in[0,\ell-1] \}$ be the set of all right
    children of nodes in $W$.

    By \cref{lem:well-nodes}(i), all nodes in
    $\{0^i\mid i\in[1,\ell]\}$ are not expandable and consequently,
    \cref{lem:well-nodes}(ii) implies that all nodes in $X$ are
    expandable.  Hence by induction hypothesis, for every $x\in X$
    there is a reachability tree $T_x:U_x\rightarrow Q\times\N$ such that $T_x(\varepsilon) =
    T(x)$.      

    It remains to show that
    $T(u)=q(n)$ is reachable.  
    Since $T$ is expandable there exists some
    $m\geq n+d$ such that $q(m)$ is reachable and
    $m\equiv n\bmod{d}$.
    Let us assume $m=n+d+k\cdot d$ for some $k\geq 0$ 
    and let    
     $\widehat{T}\colon Z\rightarrow Q\times\N$ be a
    reachability tree for $q(m)$.

    We construct the following reachability tree $T'$ (formal
    definition below) for $q(n)$ as the tree one obtains from $T$ by
    replacing the leaf $v$ by the tree $T$ repeatedly exactly $k$
    times and by adding to the counter values of the resulting nodes
    from $0^*$ in the $i$-th copy the counter value $i\cdot d$. This
    procedure is illustrated in \cref{fig:exclusive}(b). Note that
    this process yields a partial reachability tree in which every
    leaf is accepting except for the leaf $0^{(k+1)\cdot\ell}$;
    therefore we replace this leaf by the tree
    $\widehat{T}\colon Z\rightarrow Q\times\N$.  Recall that
    $T_x\colon U_x\rightarrow Q\times\N$ is a reachability tree for
    $T(x)=T_x(\varepsilon)$.  Formally, we define
    $T'\colon \left(0^{(k+1)\cdot\ell}Z\cup\bigcup_{i=0}^{k} 0^{i\cdot\ell}
    (W\cup \bigcup\{xU_x\mid x\in X\})\right)\rightarrow Q\times\N$,
    where
\begin{itemize}
\item $T'(0^{(k+1)\cdot\ell}z)\defeq\widehat{T}(z)$ for all $z\in Z$,
\end{itemize}
and for all $i\in[0,k]$ we put
\begin{itemize}
\item $T'(0^{i\cdot\ell}w)\defeq i\cdot d+T(w)$ for all $w\in W$, and
\item $T'(0^{i\cdot\ell}xy)\defeq T_x(y)$ for all $x\in X$ and all $y\in U_x$.
\end{itemize}
It easily checked that the result is a
    reachability tree for 
    $T'(\varepsilon)=q(n)$. 
\qedhere
  \end{itemize}
\end{proof}
A consequence of the previous lemma is that in particular
$T(\varepsilon)$ is reachable for every expandable partial
reachability tree $T$. By combining
\cref{lem:reachable-expandable,lem:expandable-reachable}, we obtain the following
characterisation of reachability in \BVASSone.
\begin{proposition}\label{prop:expandable-reachable}
  A node $q(n)$ is reachable if, and only if, there exists an expandable
  $B$-bounded partial reachability tree $T$ with $T(\varepsilon)=q(n)$, where
  $B\defeq 2\cdot \abs{Q} + n$.
\end{proposition}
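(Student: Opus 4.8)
The plan is to derive this proposition as an essentially immediate corollary of the two preceding lemmas, establishing each direction of the biconditional separately. The substantive work — the pumping construction producing an exclusive, expandable, and bounded tree, together with the inductive expansion argument that reconstructs a genuine reachability tree — has already been carried out, so what remains is to fit the two halves together.

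For the forward implication, namely that reachability of $q(n)$ yields an expandable $B$-bounded partial reachability tree rooted at $q(n)$, I would simply invoke \cref{lem:reachable-expandable} with the very same bound $B\defeq 2\cdot\abs{Q}+n$: that lemma directly produces such a tree. For the converse, that the existence of an expandable $B$-bounded partial reachability tree $T$ with $T(\varepsilon)=q(n)$ forces $q(n)$ to be reachable, I would first note that calling the whole tree $T$ \emph{expandable} is, by the definitional convention, the same as asserting that its root node $\varepsilon$ is expandable, since $T^{\downarrow\varepsilon}=T$. I would then apply \cref{lem:expandable-reachable} to the node $u=\varepsilon$: the lemma states that either $T(\varepsilon)$ is reachable or $\varepsilon$ is not expandable. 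As $\varepsilon$ is expandable, the second disjunct is ruled out, and hence $T(\varepsilon)=q(n)$ is reachable. Observe that $B$-boundedness is irrelevant to this direction; it matters only for the forward implication, where it keeps the witnessing tree polynomially small, which is what will later make the algorithm run in polynomial time.

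Because both directions reduce to a single citation apiece, there is no genuine obstacle to overcome here. The only point requiring a little care is the definitional bookkeeping in the converse direction: one must explicitly identify expandability of the tree $T$ with expandability of its root node $\varepsilon$ so that the node-level statement of \cref{lem:expandable-reachable} can be specialised to $\varepsilon$. Once that identification is made, the proof closes in one line.
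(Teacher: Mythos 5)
Your proof is correct and takes exactly the same route as the paper: the paper derives the proposition by combining \cref{lem:reachable-expandable} for the forward direction with \cref{lem:expandable-reachable} specialised to the root (noting, as you do, that expandability of $T$ is expandability of $\varepsilon$ since $T^{\downarrow\varepsilon}=T$). Nothing further is needed.
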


\subsection{The Algorithm}
In this section, we provide an alternating logspace procedure for
reachability in \BVASSone. This shows that reachability in
\BVASSone\ is decidable in deterministic polynomial time since
alternating logspace equals deterministic polynomial time
\cite{CKS81}. We employ the characterisation of reachability in
\BVASSone\ in terms of expandable $B$-bounded partial reachability of
\cref{prop:expandable-reachable}. First, by
\cref{prop:residue} we may assume the existence of an
alternating logspace procedure for residue reachability in \BVASSone,
i.e., an alternating logspace procedure
\textsc{ResidueReach}($q(n_0),d$) that has an accepting computation
if, and only if, $q(n)$ is reachable for some $n\geq n_0$ and $n\equiv
n_0\bmod d$. By application of this procedure, we show that one can
construct an alternating logspace procedure $\textsc{Reach}(q(n))$
that takes a configuration $q(n)$ as input and that has an accepting
computation if, and only if, there exists an expandable $B$-bounded
partial reachability tree $T$ with $T(\varepsilon)=q(n)$.

\algrenewcommand\algorithmiccomment[2][\footnotesize]{{#1\hfill\(\triangleright\) #2}}
\begin{algorithm}[t]
\caption{An alternating logspace procedure for reachability in \BVASSone.}
\footnotesize
\begin{algorithmic}[1]
\Procedure {Reach}{$q(n)$}
\If {$n\not\in[0,B]$}\textbf{ return false}\label{L Exceeds}
\EndIf
\If {$q(n)\in F\times\{0\}$}\ \textbf{return} \textbf{true}
\Else\  \textbf{non-deterministically guess} $t\in\Delta\cap(\{q\}\times Q\times Q\cup\{q\}\times\{-1,0,1\}\times Q)$
\If {$t=(q,p_1,p_2)\in Q^3$} 
\State \textbf{non-deterministically guess} $m_1,m_2\in[0,B]$ \text{s.t.} $n=m_1+m_2$
\State \textbf{return} $(\textsc{Reach}(p_1(m_1))$\textbf{ and }$\textsc{Reach}(p_2(m_2)))$
\State \hspace{0.7cm}\textbf{or} ($\textsc{AnchorReach}(q(n),p_1(m_1))$\textbf{ and }$\textsc{Reach}(p_2(m_2))$)
\State \hspace{0.7cm}\textbf{or} ($\textsc{AnchorReach}(q(n),p_2(m_2))$\textbf{ and }$\textsc{Reach}(p_1(m_1))$)
\Else\ \textbf{let } $t=(q,z,p)\in Q\times\{-1,0,1\}\times Q$
\State \textbf{return} $\textsc{Reach}(p(n+z))$ \textbf{or} $\textsc{AnchorReach}(q(n),p(n+z))$
\EndIf
\EndIf
\EndProcedure
\Statex
\Procedure {AnchorReach}{$q(n)$, $p(m)$}
\If {$\{n,m\}\not\subseteq[0,B]$}\textbf{ return false}
\EndIf
\If {$p=q$\textbf{ and }$m>n$\textbf{ and }\textsc{ResidueReach}$(q(n),m-n)$}\ \textbf{return} \textbf{true}
\Else\  \textbf{non-deterministically guess} $t\in\Delta\cap(\{p\}\times Q\times Q\cup\{p\}\times\{-1,0,1\}\times Q)$
\If {$t=(p,p_1,p_2)\in Q^3$} 
\State \textbf{non-deterministically guess} $m_1,m_2\in[0,B]$ \text{s.t.} $m=m_1+m_2$
\State \textbf{return} $\textsc{AnchorReach}(q(n),p_1(m_1))$\textbf{ and }$\textsc{Reach}(p_2(m_2))$
\State \hspace{0.7cm}\textbf{or} $\textsc{AnchorReach}(q(n),p_2(m_2))$\textbf{ and }$\textsc{Reach}(p_1(m_1))$
\Else\ \textbf{let } $t=(p,z,p')\in Q\times\{-1,0,1\}\times Q$
\State \textbf{return} $\textsc{AnchorReach}(q(n),p'(m+z))$
\EndIf
\EndIf
\EndProcedure
\end{algorithmic}
\label{alg:reachability}
\end{algorithm}

The idea is to simply to guess an expandable $B$-bounded partial
reachability tree $T$ in a top-down manner. The procedure
$\textsc{Reach}$ is defined above in
\cref{alg:reachability}. First in Line~2, $\textsc{Reach}$
rejects whenever the counter value $n$ is not in $[0,B]$ and accepts if
$q(n)$ is an accepting configuration (Line~3). 
Thus, subsequently we
may assume that $n\in[0,B]$. 
In Line~4, we non-deterministically choose
a transition $t\in\Delta$. 
If $t=(q,p_1,p_2)\in Q^3$ is a branching rule,
we non-deterministically guess how $n$ can be decomposed as
$n=m_1+m_2$. Moreover, we non-deterministically guess whether the
currently processed inner node of $T$ labelled by $q(n)$ will be an
anchor of some pumping leaf ``below.''
If not then we simply recursively call \textsc{Reach}($p_1(m_1)$) 
and \textsc{Reach}($p_2(m_2)$)
(Line~7). 
Otherwise, $q(n)$ will be the anchor of some pumping leaf
that is either in the subtree ``rooted at'' $p_1(m_1)$ (Line~8) or in the
subtree ``rooted at'' $p_2(m_2)$ (Line~9). Speaking in terms of
\cref{lem:well-nodes}, either the inner node corresponding to
configuration $p_1(m_1)$ is not exclusive or the one for $p_2(m_2)$ is not
exclusive. Suppose $p_1(m_1)$ is not exclusive, we then call a procedure
$\textsc{AnchorReach}(q(n),p(m_1))$ that takes two configurations as
arguments and tacitly assumes the first argument $q(n)$ is the
anchor and the second argument $p_1(m_1)$ corresponds to some inner node
that lies between the anchor and the pumping leaf it will eventually
correspond to.

In more detail, analogously to $\textsc{Reach}$ the procedure
$\textsc{AnchorReach}$ first checks whether the counter values of its
inputs both lie in $[0,B]$ (Line~13). 
If so it checks whether $p(m)$
corresponds to a valid pumping leaf of $q(n)$, i.e., it induces a
positive instance of the residue reachability problem by invoking
$\textsc{ResidueReach}(q(n),m-n)$ (Line~14). If not then a rule
$t\in\Delta$ is non-deterministically chosen (Line~15), and in case
$t$ is a branching rule, it is non-deterministically chosen which
``child'' of $p(m)$ is not exclusive, the other child is simply
checked for reachability by invoking procedure $\textsc{Reach}$
(Lines~18 and~19).

Obviously, $\textsc{Reach}$ and $\textsc{AnchorReach}$ can be
implemented in alternating logspace since the involved counter values
lie in the interval $[-1,B+1]$ and can hence be stored using a
logarithmic number of bits.

\section{Coverability and Boundedness}\label{sec:boundedness}
In this section, we show that the coverability and boundedness problem
for \BVASSone\ are also \P-complete.  The two problems are defined as
follows:
\medskip
\problemx{Coverability and Boundedness in \BVASSone} 
{A \BVASSone\ $\B=(Q,\Delta,F)$, a control state $q$ and $n\in \N$ encoded
  in unary.}
{\emph{Coverability:} Is there $m\ge n$ such that $q(m)$ is reachable? \\
\emph{Boundedness:} Is $\reach(q)$ finite?}
\medskip
If $q(n)$ is a positive instance of coverability then we call the
configuration $q(n)$ \emph{coverable}. A state $q$ is \emph{unbounded}
whenever $\reach(q)$ is unbounded (i.e.\ infinite).

Hardness for $\P$ is in both cases easily seen and similar to the
\P-hardness reduction from MCVP in \cref{prop:mcvp-construction}.

Moreover, the \P\ upper bound for coverability follows easily from the
\P\ upper bound for residue reachability since $q(n)$ is coverable if,
and only if, the pair $(q(n),1)$ is a positive instance of the 
residue reachability problem.
\begin{theorem}\label{thm:coverability}
  Coverability in \BVASSone\ is \P-complete.
\end{theorem}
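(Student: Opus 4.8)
The plan is to establish the two directions separately, noting that essentially all of the difficulty has already been absorbed into \cref{prop:residue}.

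For the upper bound, I would observe that coverability is literally an instance of residue reachability with modulus $d=1$. Indeed, $q(n)$ is coverable precisely when there exists some $m\ge n$ with $q(m)$ reachable, and since every integer satisfies $m\equiv n\bmod 1$, this coincides exactly with the positive-instance condition of residue reachability for the input $(q(n),1)$. Hence deciding coverability amounts to a single call to the residue reachability procedure, and \cref{prop:residue} yields membership in \P\ immediately.

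For the lower bound, I would reduce from the monotone circuit value problem, reusing the construction of \cref{prop:mcvp-construction}. That construction produces, from a circuit $\C$, a logspace-computable \BVASSone\ $\B$ with a control state $q$ such that $q(0)$ is reachable if, and only if, $\C$ evaluates to true; moreover the encoding keeps the counter at value $0$ throughout, so that $\reach(q)=\{0\}$. I would extend $\B$ by a fresh control state $q'$ together with the single transition $(q',-1,q)$, so that $q'(1)$ reaches $q(0)$, and then ask whether $q'(1)$ is coverable for the threshold $n=1$. The only configuration of the form $q'(m)$ that can be reachable is $q'(1)$: any larger value $m>1$ would force $q(m-1)$ to be reachable with $m-1\ge 1$, contradicting $\reach(q)=\{0\}$. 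Consequently coverability of $q'(1)$ coincides with reachability of $q'(1)$, which in turn holds if, and only if, $\C$ is true. As MCVP is \P-complete and the reduction is computable in logspace, \P-hardness follows.

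Combining the two bounds gives \P-completeness. I do not expect a genuine obstacle here: the upper bound is a direct corollary of the residue-reachability machinery of \cref{sec:residue}, where all the real work lies, and the lower bound is a routine padding of the circuit gadget. The only point requiring a little care is verifying in the hardness reduction that no configuration $q'(m)$ with $m>1$ is reachable, which is immediate from $\reach(q)=\{0\}$ in the base construction.
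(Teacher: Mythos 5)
Your proposal is correct and follows essentially the same route as the paper: the upper bound is the identical one-line observation that coverability of $q(n)$ is the residue reachability instance $(q(n),1)$, handled by \cref{prop:residue}, and the lower bound is the MCVP reduction the paper invokes via \cref{prop:mcvp-construction}. (Your extra state $q'$ is a harmless embellishment---since $\reach(q)\subseteq\{0\}$ in that construction, asking whether $q(0)$ is coverable already coincides with reachability of $q(0)$.)
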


The \P\ upper bound for boundedness, however, cannot be
derived immediately. In particular, as discussed in \cref{sec:lower}, there exists
a family of \BVASSone\ $(\B_n)_{n\ge 0}$ with some control state $q$
such that $\reach(q)$ is finite but of cardinality $2^n$.

For the remainder of this section, fix some
\BVASSone\ $\B=(Q,\Delta,F)$.
We first provide sufficient and necessary criteria that witness that a
control state is unbounded. Call a node $v$ in a reachability tree
\emph{decreasing} if there is an ancestor $u\prefix v$ with
$\state(u)=\state(v)$ and $\counter(u)>\counter(v)$.  The following
lemma, whose proof \iftechrep{is deferred to the appendix}{can be
  found in the technical report~\cite{GHLT16}}, shows that a
reachability tree that contains some decreasing node witnesses that
the control state at its root is unbounded.

\begin{restatable}{lemma}{lemUnboundednessNode}\label{lem:unboundedness-node}
    If a reachability tree $T$ with $T(\varepsilon)=q(n)$
    contains a decreasing node then $q$ is unbounded.
\end{restatable}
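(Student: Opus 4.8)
The plan is to extract from the decreasing node a \emph{pumpable context} and stack arbitrarily many shifted copies of it, producing reachability trees rooted at $q$ with unboundedly large counter values. Fix the decreasing node $v$ together with its witnessing ancestor $u\prec v$, and write $p\defeq\state(u)=\state(v)$, $a\defeq\counter(u)$ and $b\defeq\counter(v)$, so that $e\defeq a-b\geq 1$. Since $T$ is a reachability tree and $v$ is one of its nodes, the subtree $T^{\downarrow v}$ is itself a reachability tree with root $p(b)$. First I would decompose $T$ along the path $u=w_0\prec w_1\prec\cdots\prec w_\ell=v$ into three pieces: the \emph{upper context} $D$, consisting of $T$ with the subtree at $u$ removed (a ``tree with a hole'' whose root is $q(n)$ and whose hole sits at $u$ with value $a$); the \emph{middle context} $C$, consisting of $T^{\downarrow u}$ with the subtree at $v$ removed (root $p(a)$, hole at $v$ with value $b$); and the bottom reachability tree $T^{\downarrow v}$.

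Next I would prove a shifting lemma: for any $\Delta\geq 0$, adding $\Delta$ to the counter value of every node \emph{on the path} $w_0,\dots,w_\ell$ of $C$, while leaving the hanging sibling subtrees untouched, yields again a valid context $C_\Delta$ with root value $a+\Delta$ and hole value $b+\Delta$. This is routine: non-negativity is preserved because path values only increase; a unary step $\counter(w_{i+1})=\counter(w_i)+z$ survives since both endpoints are shifted by the same $\Delta$; and a branching step $\counter(w_i)=\counter(w_{i+1})+\counter(s_i)$ survives because the sibling value $\counter(s_i)$ is left unchanged. The same argument applies to the upper context $D$. It is essential to shift only the path, not the sibling subtrees, since shifting an accepting leaf would violate the requirement $T(\text{leaf})\in F\times\{0\}$.

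With the shifting lemma in hand I would, for each $k\geq 0$, build a reachability tree $T_k$ rooted at $q(n+k\cdot e)$ by stacking. Plug $T^{\downarrow v}$ (root value $b$) into the hole of $C$, then plug the result (root value $a=b+e$) into the hole of a copy of $C$ shifted by $e$, and so on, composing $k+1$ copies of $C$ shifted by $0,e,2e,\dots,ke$ so that each hole value matches the root value of the piece plugged beneath it; the resulting tree has root $p(a+k\cdot e)$. Finally, plug this tree into the hole of $D$ shifted by $k\cdot e$, which matches the hole value $a+k\cdot e$ and produces a reachability tree with root $q(n+k\cdot e)$. The formal index bookkeeping on tree domains is entirely analogous to the stacked-copy construction in the proof of \cref{lem:expandable-reachable}. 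Since $e\geq 1$, the set $\{\,n+k\cdot e : k\geq 0\,\}\subseteq\reach(q)$ is infinite, and hence $q$ is unbounded.

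I expect the main obstacle to be organisational rather than conceptual: making the context and plugging operations and the $(k+1)$-fold stacking precise at the level of tree domains $U\subseteq A^*$, including tracking which shift applies to which copy and checking that all glued boundary values agree. The shifting lemma itself is immediate once one remembers to shift only the path nodes and to keep the sibling subtrees (and thus the accepting leaves) intact.
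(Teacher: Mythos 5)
Your proposal is correct and takes essentially the same approach as the paper: the paper likewise pumps the negative-effect cycle between the decreasing node and its ancestor, shifting only the path values while keeping sibling subtrees (and hence accepting leaves) intact, by analogy with the construction in \cref{lem:expandable-reachable}. The only difference is presentational: the paper phrases it as repeated unfolding (replace the subtree at $v$ by the one at $u$, obtaining a root value increased by $e$ and again a decreasing node, then iterate), whereas you build the $k$-fold stacked tree directly with explicit context and shift bookkeeping.
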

Conversely, the next lemma shows that a reachability tree whose root
is labelled with a configuration with a sufficiently large counter
value gives rise to a reachability tree which contains a decreasing
node, informally speaking, shortly below its root.

\begin{restatable}{lemma}{lemUnboundednessTrees}\label{lem:unboundedness-trees}
    Suppose $n>2^{\card{Q}}$ with $n\in\reach(q)$.  There exists a
    reachability tree $T\colon U\to Q\x\N$ for $q(n')$ where $n'\ge
    n$, and which contains a decreasing node $v$ with $\len{v}\le
    \card{Q}$.
\end{restatable}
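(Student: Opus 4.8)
The plan is to work with a single root-to-leaf branch, exploit the magnitude $n > 2^{\abs Q}$, and combine this with a minimality argument that both shortens the branch and explains the slack $n' \ge n$ in the conclusion. Given a reachability tree for $q(n)$, consider the \emph{heavy path} $u_0 = \varepsilon, u_1, u_2, \dots$ obtained by always descending into the child carrying the larger counter (and into the unique child at a unary node). Writing $c_i \defeq \counter(u_i)$, at a unary step $c_{i+1} \ge c_i - 1$ and at a branching step $c_{i+1} \ge \lceil c_i/2\rceil$, so $c_{i+1} \ge c_i/2$ whenever $c_i \ge 2$. Hence $c_i \ge n/2^i$ as long as the path exists, and since $n > 2^{\abs Q}$ this forces $c_i \ge 2 > 0$ for every $i \le \abs Q$. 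As leaves carry counter $0$, each $u_i$ with $i < \abs Q$ is an inner node and thus has a heavy child; therefore the heavy path contains the $\abs Q + 1$ nodes $u_0, \dots, u_{\abs Q}$, all with strictly positive counter.

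Among these $\abs Q + 1$ nodes only $\abs Q$ control states are available, so by the pigeonhole principle there are $i < j \le \abs Q$ with $\state(u_i) = \state(u_j)$. If $c_i > c_j$ for some such pair, then $u_j$ is a decreasing node of depth $j \le \abs Q$ and we are done with $n' = n$. The whole difficulty is the opposite situation, where every repeated state on the heavy path occurs with non-decreasing counter; a single tree can genuinely exhibit this (for instance a counter that keeps climbing near the root), which is exactly why the statement asks only for a tree over some $q(n')$ with $n' \ge n$ rather than over $q(n)$ itself.

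To rule this out I would not fix the input tree but choose $T$ to be a reachability tree \emph{with the fewest nodes} among all reachability trees whose root is $q(n')$ for some $n' \ge n$; such a tree exists since the tree for $q(n)$ qualifies. The key observation is a \emph{shortcut} operation: if the heavy path of such a minimal $T$ had a repeated state $\state(u_i) = \state(u_j) = p$ with $i < j$ and $c_i \le c_j$, one could delete the segment between $u_i$ and $u_j$ by grafting the subtree rooted at $u_j$ (a valid reachability tree for $p(c_j)$) directly at the position of $u_i$. Since the grafted counter increases by $c_j - c_i \ge 0$, every strict ancestor of $u_i$ on the heavy path has its counter raised by the same amount — each branching or unary transition there stays enabled because only one child changes and the parent–children relation is affine — while all off-path subtrees and the subtree at $u_j$ are left untouched. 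The result is a reachability tree for $q(n' + (c_j - c_i))$ with $n' + (c_j - c_i) \ge n$ and strictly fewer nodes, contradicting minimality.

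Consequently, in the minimal tree $T$ the heavy path still reaches depth $\abs Q$ (its counters are positive by the magnitude estimate applied to the root value $n' \ge n > 2^{\abs Q}$), and by the shortcut argument every repeated state on it occurs with \emph{strictly decreasing} counter. The pigeonhole pair $u_i \prec u_j$ with $j \le \abs Q$ therefore satisfies $c_i > c_j$, making $u_j$ a decreasing node of depth at most $\abs Q$ in a reachability tree for $q(n')$ with $n' \ge n$, as required. I expect the main obstacle to be making the shortcut operation fully rigorous — verifying that raising the ancestors' counters keeps all transitions enabled and that the node count strictly drops — since this single step simultaneously yields the depth bound and justifies the relaxation to $n' \ge n$; the estimate $c_i \ge n/2^i$ and the pigeonhole step are routine by comparison.
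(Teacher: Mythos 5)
Your proposal is correct and takes essentially the same approach as the paper's proof: both rely on the collapse/shortcut operation (replacing the subtree at $u$ by the subtree at a same-state descendant with counter at least as large, which strictly reduces the node count and can only increase the root counter, explaining the relaxation to $n'\ge n$) combined with the heavy-path halving estimate and a pigeonhole argument on the first $\abs{Q}+1$ nodes of that path. The only cosmetic difference is that the paper performs the collapse globally as a preprocessing step, terminating by node count, whereas you invoke minimality of the node count and apply the shortcut only along the heavy path; the two formulations are interchangeable.
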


A consequence of the two previous lemmas is that $q$ is unbounded if,
and only if, $\reach(q)$ contains some $n>2^{\card{Q}}$.
Even though the reachability trees in \cref{lem:unboundedness-trees}
are sufficient witnesses for unboundedness, they still contain much
more information than necessary and are potentially of exponential
size. In order to verify the existence of such a tree, exact counter
values and in fact the subtrees rooted in $v$ as well as all
incomparable nodes can be abstracted away, as shown in the lemma
below.

Let us write $\source{t}\eqdef
q$, $\targets{t}\eqdef \{p,p'\}$ and $\effect{t}\eqdef 0$, for the
\emph{source} and \emph{target states} and the \emph{effect} of a
branching transition $t=(q,p,p')\in Q^3$, respectively. Similarly, for
$t=(q,z,p)$ define $\source{t}\eqdef q$, $\targets{t}\eqdef \{p\}$ and
$\effect{t}\eqdef z$.

\begin{lemma}\label{lem:unboundedness-char}
  A control state $p_0$ is unbounded if, and only if, there is a
  sequence of control states and transitions $p_0 t_1 p_1 t_2 \cdots
  t_k p_k$ with $k\le \abs{Q}$ and some index $j<k$ such that
  \vspace{-0.2cm}
  \begin{enumerate}[(i)]
  \item $p_{i-1} = \source{t_i}$ and $p_i\in \targets{t_i}$ for all
    $1\le i\leq k$;
  \item $p_k=p_j$ and $p_i\neq p_j$ for all $0\leq i<j$;
  \item $p(0)$ is coverable for every
    $p\in\bigcup_{i=1}^{k}\targets{t_i}$; and
  \item for every $j< i \le k$, there exists $n_i\le \abs{Q}+1$ such that
      \begin{enumerate}
          \item if $t_i=(p_{i-1},p_i,p_i')\in Q^3$ or $t_i=(p_{i-1},p_i',p_i)\in Q^3$ then $p_i'(n_i)$ is
              coverable, else $n_i=0$,
          \item $\sum_{i=j+1}^k n_i > \sum_{i=j+1}^k \effect{t_i}$.
      \end{enumerate}
  \end{enumerate}
\end{lemma}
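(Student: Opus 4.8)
The plan is to read the combinatorial witness as the skeleton of a reachability tree containing a \emph{decreasing node}, and then to invoke the two preceding lemmas: \cref{lem:unboundedness-node} turns such a tree into a proof of unboundedness, while \cref{lem:unboundedness-trees} guarantees that whenever $p_0$ is unbounded such a tree, of depth at most $\card{Q}$, exists. Concretely, the sequence $p_0 t_1 p_1 \cdots t_k p_k$ is a lasso: a simple tail $p_0,\dots,p_j$ (condition (ii) forces $p_j$ to be the first occurrence of the state $p_k=p_j$) followed by a cycle $p_j,\dots,p_k$. Along a branch realising this lasso, a linear step $t_i$ changes the counter by $\effect{t_i}$ and a branching step diverts an amount $n_i$ to the sibling $p_i'$, so the net change of the counter from the anchor $p_j$ down to $p_k$ equals $\sum_{i=j+1}^k(\effect{t_i}-n_i)$; condition (iv)(b) is exactly the requirement that this is negative, i.e.\ that $p_k$ becomes a decreasing node with anchor $p_j$. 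Conditions (iii) and (iv)(a) assert that every state hanging off this branch — all side states $p_i'$ and, via (iii), all states on the branch itself — can be completed into a reachability tree, since $p(0)$ coverable means $p(m)$ is reachable for some $m\ge 0$.

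For the direction ``$p_0$ unbounded $\Rightarrow$ witness'', I would start from \cref{lem:unboundedness-trees}: unboundedness gives some $n>2^{\card{Q}}$ in $\reach(p_0)$, hence a reachability tree $T$ for $p_0(n')$ with $n'\ge n$ containing a decreasing node $v$ of depth $\le\card{Q}$. Let $p=\state(v)$, let $j$ be the depth of the first occurrence of $p$ on the branch to $v$, and let $u\prec v$ be an anchor witnessing that $v$ is decreasing. I would build the abstract witness by \emph{gluing} the tail of $T$ from the root to this first occurrence (which by definition never repeats $p$) onto the cycle read off the segment from $u$ to $v$; since the first occurrence lies at or above $u$, the glued path has length at most $\len{v}\le\card{Q}$, giving (i), (ii) and $j<k$. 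Coverability of all targets (iii) is immediate, as each target labels a node of $T$ and is therefore reachable at some counter. For (iv) I would set $n_i\defeq\min(\card{Q}+1,m_i)$, where $m_i$ is the actual counter of the sibling $p_i'$ in $T$; each $p_i'(n_i)$ is coverable because $p_i'(m_i)$ is reachable and $m_i\ge n_i$. The decrease $\counter(v)<\counter(u)$ yields $\sum m_i>\sum\effect{t_i}$ over the cycle, and since $\sum\effect{t_i}\le\card{Q}<\card{Q}+1$ the capped values still satisfy $\sum n_i>\sum\effect{t_i}$: either all $m_i\le\card{Q}+1$, so the sum is unchanged, or a single capped term already exceeds $\sum\effect{t_i}$.

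For the converse I would use the witness to assemble, for arbitrarily large $N$, a reachability tree rooted at $p_0(N)$, thereby showing $\reach(p_0)$ is infinite; equivalently, one reachability tree with a decreasing node together with \cref{lem:unboundedness-node}. The construction follows the lasso top-down: route the counter along the tail $p_0\to p_j$, completing every side child by a fixed reachability tree obtained from coverability (iii); then iterate the cycle, each pass lowering the main counter by the positive amount $\delta\defeq\sum n_i-\sum\effect{t_i}$ and spawning the coverable side trees for the $p_i'$; finally complete the lowest occurrence of $p_j$ using coverability. Because each pass strictly lowers the counter while returning to state $p_j$, the node reached after the first pass is a decreasing node with that pass's entry point as its anchor, which is what \cref{lem:unboundedness-node} needs.

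The main obstacle is bookkeeping non-negativity of all counters in this construction, together with landing the lowest $p_j$-node exactly on a reachable value. A branching step may divert a large amount to a side tree, so the main counter must stay high enough to survive every split; routing a large buffer down the spine handles all but the final pass, whose counter is necessarily close to the completion value. This is precisely where condition (iii) — coverability of every state on the branch, not merely of the siblings — is needed, and where I expect to either invoke residue reachability (\cref{prop:residue}) to complete at a sufficiently large value in the correct residue class modulo $\delta$, or to choose $N$ in that residue class so that the iterated cycle lands exactly on the coverable value. Checking that the resulting object is a genuine reachability tree (all leaves accepting, all counters non-negative) is routine once these two points are settled.
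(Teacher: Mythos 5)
Your forward direction (unbounded $\Rightarrow$ witness) is correct and follows the paper's proof almost verbatim: extract a reachability tree with a short decreasing node via \cref{lem:unboundedness-trees}, read the witness off the branch, and cap the sibling values at $n_i\defeq\min\{\card{Q}+1,m_i\}$ with the same two-case analysis for (iv)(b). You are in fact more careful than the paper on one point: the paper declares condition~(ii) ``immediate'', although the anchor of the decreasing node need not be the first occurrence of its state on the branch; your gluing of the prefix up to the first occurrence onto the segment from the anchor to the decreasing node repairs this cleanly, and the length bound survives because the first occurrence lies at or above the anchor.

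The converse direction, however, has a genuine gap, located exactly where you flag the ``main obstacle''. Your construction completes the lowest occurrence of $p_j$ with a value $m^*$ obtained from coverability of $p_j(0)$. But the counters of the nodes inside that bottom-most pass of the cycle equal $m^*$ plus partial sums of the per-step net changes, and these partial sums do not depend on the root value $N$ at all; so neither choosing $N$ in a residue class nor iterating the cycle more often can stop them from going negative. Concretely, for the cycle $p_j \xrightarrow{(p_j,p_1,a)} p_1 \xrightarrow{(p_1,+1,p_j)} p_j$ with $\reach(a)=\{2\}$, $0\in\reach(p_j)$ and $0\in\reach(p_1)$, completing the lowest $p_j$ at $m^*=0$ forces the node directly above it to carry counter $-1$, for every $N$ and every number of passes. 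Your alternative remedy, ``invoking'' residue reachability, does not work either: \cref{prop:residue} provides a decision procedure, not an existence guarantee, and asserting that a large reachable value of $p_j$ in the right residue class exists is precisely what is to be proved, so this is circular. The actual repair is a rotation (cycle-lemma) argument: writing $g_i\defeq s_i-\effect{t_i}$ for the bottom-up net gain of step $i$ (where $s_i\geq n_i$ is the reachable sibling value), condition~(iv)(b) gives $\sum_i g_i>0$, hence the cyclic rotation of $(g_{j+1},\dots,g_k)$ starting just after the last minimum of its prefix sums has all partial sums $\geq 1$. One must complete the bottom of the pumping structure at the cycle state $p_{i^*}$ where this rotation starts, using \emph{its} coverability from condition~(iii) --- this is the real reason (iii) demands coverability of every state on the branch, as you correctly suspected. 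With that choice all counters stay non-negative, every pass raises the value at $p_{i^*}$ by $\sum_i g_i>0$, a single pass already produces a decreasing node, and \cref{lem:unboundedness-node} finishes the argument (in the example above, the valid trees indeed complete at $p_1$, not at $p_j$). In fairness, the paper's own proof of this direction is a one-sentence assertion that skips the issue entirely; but as written, your construction fails on it.
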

\vspace{0.2cm}
The last condition~(iv) expresses that the cyclic suffix
is consistent with the transition relation and guarantees a
decreasing node.
\begin{proof}
  If $p_0$ is unbounded, then by \cref{lem:unboundedness-trees} we can
  take a reachability tree $T$ containing a short decreasing node $v$,
  i.e., with $\len{v}\le \abs{Q}$.
  This decreasing node provides the claimed sequence:
  Conditions (i) and (ii) are immediate;
  for condition (iii) notice that for each mentioned state $p$
  some configuration $p(n)$ is reachable, as guaranteed by
  the respective subtree of $T$. This means in particular that
  $p(0)$ is coverable.

  For (iv), 
  first notice that the combined effect $\sum_{i=j+1}^k \effect{t_i}$
  of those transitions used between $v$ (where $\state(v)=p_k$) and its anchor 
  (with state $p_j=p_k$) is bounded by $\len{v}=k\le \abs{Q}$.
  Secondly, as for condition~(iii), we can assume that
  for all $p'_i$
  such that either 
  $t_i=(p_{i-1},p_i',p_i)\in Q^3$ or $t_i=(p_{i-1},p_i,p_i')\in Q^3$,
  some 
  configuration
  $p'_i(m_i)$ is reachable.
  For those $i\le k$ where $t_i\notin Q^3$, let $m_i\eqdef 0$.
  Now, for all $j<i\le k$, define
  $n_i\defeq\min\{|Q|+1,m_i\}$.
  
  Case~(iv)(a) holds immediately by definition of the $n_i$.
  To show Case~(iv)(b) we distinguish two cases. In case $m_i\geq
  |Q|+1$ for some such $i$ it follows that $n_i=|Q|+1$ and hence
  $\sum_{i=j+1}^k n_i\geq |Q|+1>\sum_{i=j+1}^k \effect{t_i}$.
  Otherwise, if all $m_i<|Q|+1$ then for all $i$ it holds that
  $n_i=m_i$ and so $\sum_{i=j+1}^k m_i\leq\sum_{i=j+1}^k \effect{t_i}$
  contradicts that $v$ is a decreasing node.

  For the converse direction, assume a sequence as claimed
  above. Conditions~(i)-(iii) imply the existence of a reachability
  tree for some $p_0(n)$. Condition~(iv) ensures that there
  is such a tree with a decreasing node. We conclude by
  \cref{lem:unboundedness-node}.
\end{proof}

\Cref{lem:unboundedness-char} provides a
characterisation of unbounded states that directly translates into an
alternating logspace algorithm for the boundedness problem, similar to
\cref{alg:reachability},
which yields the \P\ upper bound.
In particular, observe that a witnessing sequence
satisfying Conditions~(i) and~(ii), as well as the numbers $n_i\le
\card{Q}+1$ can be guessed non-deterministically in logarithmic
space. Moreover, Conditions~(iii) and~(iv) are decidable in polynomial
time by \cref{thm:coverability}.

\begin{theorem}
  Boundedness in \BVASSone\ is \P-complete.
\end{theorem}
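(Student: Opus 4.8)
The plan is to prove the two directions separately, relying on the characterisation of unbounded control states in \cref{lem:unboundedness-char} for the upper bound and on \cref{prop:mcvp-construction} for hardness.

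For \P-hardness I would build directly on \cref{prop:mcvp-construction}, which produces a \BVASSone\ in which $q(0)$ is reachable if, and only if, a given Boolean circuit evaluates to true. Adding a fresh control state $q^*$ together with the transitions $(q^*,-1,q^*)$ and $(q^*,0,q)$ makes $q^*(n)$ reachable for every $n\ge 0$ exactly when $q(0)$ is reachable, since from $q^*(n)$ one may decrement step by step down to $q^*(0)$ and then pass to $q(0)$. Consequently $\reach(q^*)$ is infinite if, and only if, the circuit evaluates to true, which yields \P-hardness by logspace reduction from the monotone circuit value problem, as already announced before \cref{thm:coverability}.

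For the \P\ upper bound the key is that \cref{lem:unboundedness-char} reduces unboundedness of a control state $p_0$ to the existence of a short witnessing sequence $p_0\, t_1\, p_1\cdots t_k\, p_k$ with $k\le\abs{Q}$. I would design an alternating logspace procedure, in the style of \cref{alg:reachability}, that guesses this sequence transition by transition rather than storing it in full. The procedure maintains only the current state $p_{i-1}$, the designated cycle state $p_j$, a step counter bounded by $\abs{Q}$, and two running sums over the cyclic suffix, namely the accumulated effect $\sum\effect{t_i}$ and the accumulated $\sum n_i$; all of these fit in logarithmic space because $k\le\abs{Q}$ and each $n_i\le\abs{Q}+1$. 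Conditions~(i) and~(ii) are verified locally along the guessed walk, checking $p_{i-1}=\source{t_i}$, $p_i\in\targets{t_i}$, that no state visited before position $j$ equals $p_j$, and that the cycle closes with $p_k=p_j$. The coverability obligations in Conditions~(iii) and~(iv)(a) are discharged by universally branching into calls to the coverability routine, which runs in alternating logspace by \cref{thm:coverability}, while Condition~(iv)(b) becomes a final comparison of the two running sums.

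Since alternating logspace coincides with \P~\cite{CKS81} and \P\ is closed under complement, deciding unboundedness in alternating logspace gives boundedness in \P, matching the hardness above. I do not expect a substantial obstacle, as \cref{lem:unboundedness-char} already isolates all the combinatorial content; the only point requiring care is to keep the guess genuinely local, so that the global Conditions~(ii) and~(iv)(b) are checked by carrying the cycle state and the partial sums along the path instead of recording the entire sequence, thereby staying within logarithmic working space.
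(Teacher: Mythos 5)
Your proof is correct and follows essentially the same route as the paper: the upper bound guesses the witnessing sequence of \cref{lem:unboundedness-char} in alternating logspace, discharging the coverability conditions via \cref{thm:coverability}, and hardness comes from the MCVP construction of \cref{prop:mcvp-construction}. Your explicit gadget $(q^*,-1,q^*)$, $(q^*,0,q)$ for the hardness direction and the bookkeeping of the running sums for condition~(iv)(b) are exactly the details the paper leaves implicit, and both are sound.
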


\iftechrep{
\section{Conclusion}
We showed that reachability, coverability and boundedness in
\BVASSone\ are all \P-complete and thereby established the first
decidability result for reachability in a subclass of \BVASS. This low
complexity is quite surprising since the general reachability problem
for \BVASS\ is at least non-elementary~\cite{LaS15} and there exist
families of instances of \BVASSone-reachability problems whose
reachability trees contain an exponential number of distinct counter
values, cf.~\cref{sec:lower}. The approach developed in this paper
shows that it is not necessary to explicitly construct a full
reachability tree in order to witness reachability. In fact, we showed
in \cref{sec:reachability} that the existence of so-called residue and
expandable reachability trees suffices in order to decide reachability
and can be witnessed in polynomial time.

Our approach is quite specific to having only one counter available in
\BVASSone\ and does not seem to immediately generalise to higher
dimensions. Nevertheless, we believe that this paper spreads some
optimism and provides sufficient evidence that obtaining results for
reachability in general \BVASS\ is not impossible.

}

\bibliographystyle{plainurl}
\bibliography{bibliography}

\iftechrep{
\newpage
\appendix
\section{Missing Proofs}
\subsection{Missing Proofs from \cref{sec:lower}}

An instance of \textsc{MCVP} is a Boolean circuit $\C$ consisting of
$n$ gates $g_1,\ldots,g_n$ such that for all $k\in[1,n]$ either
$g_k=\top$, $g_k=\bot$ or there are $1\leq i,j<k$ such that $g_k = g_i
\vee g_j$ or $g_k= g_i \wedge g_j$. \textsc{MCVP} is to decide whether
$\mathcal{C}$ evaluates to true, i.e.\ if $g_n$ evaluates to true.  We
note that \textsc{MCVP} is the canonical \P-complete
problem~\cite{GHR95}. The following proposition gives the lower bound
for \cref{thm:main}.

\propMCVPConstruction*
\begin{proof}
  From $\C$ we derive a \BVASSone\ $\B\defeq(Q,\Delta,F)$, where
  $Q\defeq \{ q_1,\ldots, q_n\}$, $F\defeq \{ q_i \mid g_i = \top \}$
  and $\Delta \defeq \{ (q_k,q_i,q_j) \mid g_k = g_i \wedge g_j \}
  \cup \{ (g_k,0,g_i), (g_k,0,g_j) \mid g_k = g_i \vee g_j \}$. Hence,
  $\wedge$-gates are simulated by splits and $\vee$-gates by
  non-deterministic branching. It is easily seen that $g_n$ evaluates
  to true if, and only if, $q_k(0)$ is reachable in $\B$.
\end{proof}

\propNPHardness*
\begin{proof}
  We first show that for any $m\in \N$ given in binary, we can in
  logarithmic space extend $\B_n$ constructed above with a control
  state $q_m$ such that $\reach(q_m)=\{ m \}$. Let $m=\sum_{0\le i \le
    n} b_i \cdot 2^i$ with $b_i\in \{0,1\}$ be the binary
  representation of $m$. We introduce additional fresh control states
  $q_m^i$, $0\le i\le n$, transitions $(q_m,0,q_m^n)$ and
  $(q_m^0,q_0,q_f)$, and for every $1\le i\le n$ transitions
    $(q_m^i,0,q_m^{i-1})$ if $b_i = 0$ and $(q_m^i,q_i,q_m^{i-1})$ if
  $b_i = 1$.
  It is easily checked that $\reach(q_m)=\{ m \}$.

  In order to show hardness for \NP, we reduce from the problem
  \textsc{Subset Sum}. Given a finite set $S=\{ m_1,\ldots,
  m_k\}\subseteq \N$ and $t\in \N$ with all numbers encoded in binary,
  \textsc{Subset Sum} is the problem to decide whether there are
  $c_1,\ldots,c_k\in \{0,1\}$ such that $t=\sum_{1\le i \le k}
  c_i\cdot m_i$ and is known to be \NP-complete~\cite{Pap94}. As shown
  above, we can construct a \BVASS\ $\B$ with control states
  $q_{m_1},\ldots, q_{m_k}$ such that $\reach(q_{m_i})=\{ m_i \}$. We
  introduce additional fresh control states $q_{c_1},\ldots, q_{c_k}$
  that allow us to non-deterministically make a choice for every $c_i$
  by introducing for every $1\le i < k$ transitions
  $(q_{c_i},0,q_{c_{i+1}})$ and $(q_{c_i},q_{m_i},q_{c_{i+1}})$. It is
  now easily seen that the instance $(S,t)$ of \textsc{Subset Sum} is
  valid if, and only if, $q_{c_1}(t)$ is reachable.
\end{proof}

\corNPBVASSTwo
\begin{proof}[Proof (sketch)]
  The statement follows from an easy adaption of the proof of
  \cref{prop:np-hardness}. It suffices to show how to construct a
  \BVASS{2} that reaches the control state $q_{c_1}$ from
  \cref{prop:np-hardness} with counter values $(t,0)$. But this
  can easily be achieved by first adding a non-deterministic number of
  times $(1,1)$ to the counter and then by branching into the control
  states $q_{c_1}$ and $q_t$, where $q_t$ is suitably adjusted such that
  $\reach(q_t)=\{ (0,t) \}$.
\end{proof}

\newpage
\subsection{Missing Proofs from \cref{sec:residue}}

\lemComputationS*
\begin{proof}
We note that $N$ is polynomially bounded in $\card{\B}+\card{d}$.
Moreover, $S\subseteq Q\times[0,n+N]$ and $S$ can be computed in
polynomial time by using a dynamic programming approach.
\end{proof}

\lemFixedPointComputation*
\begin{proof}
  Analogously to the computation of $S$ in
  \cref{lem:computationS}, one shows that $R_0$ is computable in
  polynomial time. To see that $R=R_N$, note that by definition we
  have $R_i\subseteq R_{i+1}$ for all $i\in\N$. If $R_i\subset
  R_{i+1}$, there is at least one pair from $Q\times \Z_d$ that is in
  $R_{i+1}$ and not in $R_i$. Since there are at most $N$ such pairs,
  the sequence stabilises after at most $N$ steps at $R_N$.  Since $N$
  is polynomial in $\abs{\B}+d$, consequently $R_N$ can also be
  computed in polynomial time.
\end{proof}

\lemModComputability*
\begin{proof}
  Polynomial-time computability of $X$ follows immediately from the
  polynomial time computability of $S$ (\cref{lem:computationS})
  and of $R$ (\cref{lem:fixed-point-computation}). It thus
  remains to prove that $X=\{(q,n\bmod d)\mid q\in
  Q,n\in\reach(q),n\geq n_0\}$.

(``$\subseteq$'') Trivially, $S[n_0,n_0+N]/\Z_d\subseteq \{(q,n\bmod
d)\mid q\in Q,n\in\reach(q),n\geq n_0\}$ since $S\subseteq \{(q,n)\mid
q\in Q,n\in\reach(q)\}$.  Hence it remains to show that $R$ is
contained in $\{(q,n\bmod d)\mid q\in Q,n\in\reach(q),n\geq n_0\}$.

To prove this, we show that for every $i\in[0,N]$ and each $(q,r)\in
R_i$ there exists some $n\in\reach(q)$ with $n\geq n_0+N-i$ and
$n\equiv r\bmod d$ by induction on $i$. We note that this is
sufficient to prove since $R=R_N$ and thus for each $(q,r)\in R$ there
exists some $n\in\reach(q)$ with $n\geq n_0$ and $n\equiv r\bmod{d}$.

For the induction base, i.e.\ $i=0$, we recall that for each $(q,r)\in
R_0$ there exists some $n\geq n_0+N=n_0+N-i$ such that $n\equiv r\bmod
d$ and there is some almost $(n_0+N)$-bounded reachability tree whose
root is labelled with $q(n)$ by definition of $R_0$; in particular
$n\in\reach(q)$.

For the induction step, let $i+1\leq N$ and let us assume $(q,r)\in
R_{i+1}$.  If already $(q,r)\in R_{i}$ then $(q,r)$ satisfies the
desired property immediately by the induction hypothesis. Otherwise, if
$(q,r)\in\Delta(R_i)$ then $r\equiv r'-z\bmod d$ for some $(q',r')\in
R_i$ and some $(q,z,q')\in\Delta$.  By the induction hypothesis, there
exists some $n'\in\reach(q')$ with $n'\geq n_0+N-i$ and $n'\equiv
r'\bmod d$.  For $n=n'-z$, we have $n\equiv r'-z\equiv r\bmod{d}$ and
since $n\geq n'-|z|\geq n_0+N-i-|z|\geq n_0+N-(i+1)\geq n_0\geq0$ it
follows $n\in\reach(q)$.

It remains to consider the case when
$(q,r)\in\Delta(R_i,S/\Z_d)\cup\Delta(S/\Z_d,R_i)\cup\Delta(R_i,R_i)$.
We only treat the case $(q,r)\in\Delta(R_i,S/\Z_d)$, the other cases
can be proven analogously.  In this case we have $r\equiv
r'-n''\bmod{d}$ for some $(p',r')\in R_i$ and some $(p'',n'')\in S$,
where $(q,p',p'')\in\Delta$. Clearly, $n''\in\reach(p'')$ by
definition of $S$.  By the induction hypothesis, there exists some
$n'\geq n_0+N-i$ such that $n'\in\reach(p')$ and $n'\equiv
n\bmod{d}$. Let $n=n'+n''$. Hence $n\in\reach(q)$, since
$n'\in\reach(p')$ and $n''\in\reach(p'')$. Obviously $n\equiv
r\bmod{d}$ and, finally, $n\geq n'\geq n_0+N-i\geq n_0+N-(i+1)$.

(``$\supseteq$'') Assume some $q(n)$ is reachable for some $n\geq
n_0$. We prove that $(q,n\bmod{d})\in X$. To this end, let us fix some
reachability tree $T\colon U\rightarrow Q\times\N$ for $q(n)$. If $T$
is $(n_0+N)$-bounded it follows that $(q,n)\in S$ and we are done
since $n_0\leq n\leq n_0+N$.

Consequently, let us assume that $T$ is not $(n_0+N)$-bounded. First,
observe that $T(u)\in F\times\{0\}\subseteq S$ for all leaves $u\in
U$. In addition, the set of nodes $V\defeq\{u\in U\mid T(u)\not\in
S\}$ is non-empty for otherwise $T$ would be
$(n_0+N)$-bounded. Moreover, $V$ is prefix-closed and note that every
$\preceq$-maximal node $v$ in $V$ satisfies $T(v)/\Z_d\in R_0$ by the
choice of $V$. For every node $v\in V$ that is not $\preceq$-maximal,
we either have $T(v)/\Z_d\in\Delta(T(v0)/\Z_d)$ (if $v0$ is the only
child of $v$) or $T(v)/\Z_d\in\Delta(T(v0)/\Z_d,T(v1)/\Z_d)$ (if $v$
has two children $v0$ and $v1$). Moreover, note that
$\Delta(R)\subseteq R$ and $\Delta(R,R)\subseteq R$. This shows that
$T/\Z_d(V)$ is contained in $R$, in particular $T/\Z_d(\varepsilon)\in
R$ which proves $(q,n\bmod d)\in R\subseteq X$.
\end{proof}

\subsection{Missing Proofs from \cref{ssec:expandable}}

\lemPumpingNodes*
\begin{proof}
  The counter value of a node exceeds that of its parent by at most one.
  Consequently, for every $\counter(u)\leq i \leq \counter(v)$ there is
  a node $u\preceq u_i\preceq v$ with $\counter(u_i)=i$
  and further,
  for all $j<k$ in between $\counter(u)$ and $\counter(v)$ it holds that $u_j\prec u_k$.
  Since $\counter(v)-\counter(u)\geq|Q|$,
  there must be some $\counter(u)\leq j<k\leq\counter(v)$ such that
  $\state(u_j) = \state(u_k)$.
  Then $u'\eqdef u_j$ and $v'\eqdef u_k$ satisfy the claim.
\end{proof}

\lemWellNodes*
\begin{proof}
  Regarding~(i), every anchor $u'$ of every other increasing leaf $v'$
  of $T^{\downarrow u}$ is a strict descendant of $u$, since otherwise
  $u,u' \preceq \lca(v,v')$, contradicting $T$ being
  exclusive. Consequently, $T$ being exclusive implies $u$ being
  exclusive. Moreover, for $w$ such that $u\prec w\preceq v$, due to
  an anchor being maximal, $v$ has no anchor on the subtree rooted at
  any such $w$. Hence, $w$ violates the condition of being expandable.

  Regarding~(ii), suppose $u0$ and $u1$ are both not
  expandable. This can only be if both are not exclusive,
  as $T$ is expandable. If all increasing leaves of $T^{\downarrow u0}$ had
  their anchors in $T^{\downarrow u0}$, then $u0$ would be exclusive. Hence there
  is some increasing leaf $v_0$ in $T^{\downarrow u0}$ with anchor $u_0$ such
  that $u_0\preceq u$. Likewise, we find an increasing leaf $v_1$ with
  anchor $u_1$ such that $u_1\preceq u$ in $T^{\downarrow u1}$.
  But then $u_0,u_1\preceq \lca(v_0,v_1)$ and hence $T$ is not
  exclusive, a contradiction.
\end{proof}

\newpage
\subsection{Missing Proofs from \cref{sec:boundedness}}

\lemUnboundednessNode*
\begin{proof}
  It suffices to observe that one can unfold the cyclic suffix of a
  decreasing node $v$ by replacing the subtree rooted in $v$ by that
  one rooted in $u\prefix v$. This construction is analogous to the
  construction in the proof of \cref{lem:expandable-reachable}, with
  the only difference that the effect of the cycle is negative
  here. The result of such an operation is a reachability tree whose
  root is labelled with a configuration that has the same control
  state and whose counter value is strictly increased. Moreover, this
  reachability tree still contains a decreasing node. Such an
  unfolding can therefore be repeated arbitrarily often, from which
  the claim follows.
\end{proof}

\lemUnboundednessTrees*
\begin{proof}
  In any reachability tree it is possible to collapse the part between
  two nodes $u\prefix v$ if $\state(u)=\state(v)$ and
  $\counter(u)\le\counter(v)$, that is, to replace the subtree rooted
  in $u$ by the one rooted in $v$. The result of this is a
  reachability tree with fewer nodes and where the root has the same
  state and a counter value at least as large as in the original tree.

  Thus, we may assume with no loss of generality a reachability tree
  $T$ with root $T(\eps)=q(n)$ for $n\ge2^{\card{Q}}$ and such that
  for any two nodes $u\prefix v$ with $\state(u)=\state(v)$, it holds
  that $\counter(u)> \counter(v)$.

  In order to find a decreasing node, we move from the root downwards,
  always choosing the successor with the largest counter value. This
  way, the counter value of a chosen node is at least half as large as
  the counter of its parent.  Since the value in the root is greater
  or equal to $2^{\card{Q}}$, this means that the produced sequence is
  longer than $\card{Q}$. In particular, the prefix of length
  $\card{Q}$ must contain a decreasing node.
\end{proof}

}{}

\end{document}